\renewcommand{\markboth}[1]{\renewcommand{\leftmark}{#1}\renewcommand{\rightmark}{#1}}
\def\markboth#1#2{\def\leftmark{\@IEEEcompsoconly{\sffamily}\MakeUppercase{\protect#1}}%
\def\rightmark{\@IEEEcompsoconly{\sffamily}\MakeUppercase{\protect#2}}}
\newtheorem{lemma}{Lemma}
\newtheorem{theorem}{Theorem}
\newtheorem{corollary}{Corollary}
\newtheorem{remark}{Remark}
\newenvironment{DIFnomarkup}{}{}
\newacronym{MI}{MI}{mutual information} 
\newacronym{BER}{BER}{bit error rate} 
\newacronym{QAM}{QAM}{quadrature amplitude modulation} 
\newacronym{GMI}{GMI}{generalized mutual information} 
\newacronym{PDF}{PDF}{probability density function} 
\newacronym{PMF}{PMF}{probability mass function} 
\newacronym{AWGN}{AWGN}{additive white Gaussian noise} 
\newacronym{SNR}{SNR}{signal-to-noise ratio} 
\newacronym{ML}{ML}{maximum likelihood} 
\newacronym{BRGC}{BRGC}{binary reflected Gray code} 
\newacronym{NBC}{NBC}{natural binary code} 
\newacronym{BICM}{BICM}{bit-interleaved coded modulation} 
\newacronym{PAM}{PAM}{pulse-amplitude modulation} 
\newacronym{FBC}{FBC}{folded binary code} 
\newacronym{bpcu}{bpcu}{bits per channel use} 
\newacronym{iid}{i.i.d.}{independent identically distributed} 
\newacronym{iud}{i.u.d.}{independent uniformly distributed} 
\newacronym{MIMO}{MIMO}{multiple-input multiple-output} 
\newacronym{PSK}{PSK}{phase-shift keying} 
\newacronym{RV}{RV}{random variable}
\newcommand{\Pe}{p_{\mathrm{e}}}
\newcommand{\setC}{\mathcal{C}}
\newcommand{\eps}{\varepsilon}
\newcommand{\setS}{\mathcal{S}}
\newcommand{\SNR}{\rho}
\newcommand{\LL}{\boldsymbol{\mathbb{L}}}
\newcommand{\Expect}[1]{\mathrm{E}\left\{#1\right\}}
\newcommand{\MI}{\mathrm{MI}}
\renewcommand{\Pr}[1]{\mathrm{Pr}\left(#1\right)}
\newcommand{\GMI}{\mathrm{GMI}}
\newcommand{\BICMMI}{\mathrm{MI}^{\mathsf{bicm}}}
\newcommand{\BICMGMI}{\mathrm{GMI}^{\mathsf{bicm}}}
\newcommand{\HBICMGMI}{\mathrm{GMI}^{\mathsf{harm}}}
\newcommand{\e}{\mathrm{e}}
\newcommand{\sign}{\mathrm{sign}}
\newcommand{\pebar}{\bar{p}_{\e}}
\newcommand{\m}{\mathsf{m}}
\newcommand{\MyArg}[1]{_{#1}}
\newcommand{\BinSet}{\mathcal{B}}
\newcommand{\EqCo}{} 
\newcommand{\argmin}[2]{\mathop{\mathrm{argmin}}_{#1} {#2}}
\newcommand{\argmax}[2]{\mathop{\mathrm{argmax}}_{#1} {#2}}
\renewcommand{\max}[2]{\mathop{\mathrm{max}}_{#1} {#2}}
\newcommand{\eqsref}[2]{(\ref{#1})--(\ref{#2})}	
\newcommand{\figref}[1]{Fig.~\ref{#1}}
\newcommand{\theref}[1]{Theorem~\ref{#1}}
\newcommand{\secref}[1]{Section~\ref{#1}}
\newcommand{\bL}{\boldsymbol{L}}
\newcommand{\bLex}{\bL^{\mathsf{ex}}}
\newcommand{\bLml}{\bL^{\mathsf{ml}}}
\newcommand{\bLmix}{\tilde{\bL}}
\newcommand{\bLexmix}{\tilde{\bL}^{\mathsf{ex}}}
\newcommand{\bLmlmix}{\tilde{\bL}^{\mathsf{ml}}}
\newcommand{\Lex}{L^{\mathsf{ex}}}
\newcommand{\Lml}{L^{\mathsf{ml}}}
\newcommand{\lex}{l^{\mathsf{ex}}}
\newcommand{\lml}{l^{\mathsf{ml}}}
\newcommand{\Lmix}{\tilde{L}}
\newcommand{\Lexmix}{\tilde{L}^{\mathsf{ex}}}
\newcommand{\Lmlmix}{\tilde{L}^{\mathsf{ml}}}
\newcommand{\realR}{\mathbb{R}}
\begin{document}

\begin{DIFnomarkup}

\title{On the Information Loss of the Max-Log Approximation in BICM Systems}

\author{Mikhail~Ivanov, Christian Häger, Fredrik Br\"{a}nnstr\"{o}m, Alexandre Graell i Amat, Alex~Alvarado, and~Erik Agrell
\thanks{This research was supported by the Swedish Research Council, Sweden, under Grant No. 2011-5950, the Ericsson's Research Foundation, Sweden, and the European Community's Seventh's Framework Programme (FP7/2007-2013) under grant agreement No. 271986.}
\thanks{M. Ivanov, C. Häger, F. Br\"{a}nnstr\"{o}m, A. Graell i Amat, and E.~Agrell are with the Dept.~of Signals and Systems, Chalmers Univ.~of Technology, SE-41296 Gothenburg, Sweden (e-mail: \{mikhail.ivanov, christian.haeger, fredrik.brannstrom, alexandre.graell, agrell\}@chalmers.se).}
\thanks{A.~Alvarado is with the Optical Networks Group, Dept. of
Electronic \& Electrical Engineering, Univ. College London, WC1E
7JE London, UK (e-mail: alex.alvarado@ieee.org).}
}

\maketitle

\end{DIFnomarkup}

\begin{abstract}
 We present a comprehensive study of the information rate loss of the
 max-log approximation for $M$-ary pulse-amplitude modulation (PAM) in
 a bit-interleaved coded modulation (BICM) system. It is widely
 assumed that the calculation of L-values using the max-log
 approximation leads to an information loss. We prove that this
 assumption is correct for all $M$-PAM constellations and labelings
 with the exception of a symmetric 4-PAM constellation labeled
 with a Gray code. We also show that for max-log L-values, the BICM
 generalized mutual information (GMI), which is an achievable rate for
 a standard BICM decoder, is too pessimistic. In particular, it is  proved that the so-called ``harmonized'' GMI, which can be seen as the sum
 of bit-level GMIs, is achievable without any modifications to the
 decoder.  We then study how bit-level channel symmetrization and
 mixing affect the mutual information (MI) and the GMI for max-log
 L-values. Our results show that these operations, which are often
 used when analyzing BICM systems, preserve the GMI. However, this
 is not necessarily the case when the MI is considered. 
 Necessary and sufficient conditions under which these operations
 preserve the MI are provided. 
\end{abstract}

\begin{IEEEkeywords}
Bit-interleaved coded modulation, generalized mutual information, logarithmic likelihood ratio, max-log approximation, mismatched decoder.
\end{IEEEkeywords}

\glsresetall

\section{Introduction}\label{sec:intro}

\Gls{BICM} \cite{Zehavi92may,Fabregas08_Book,Alvarado15_Book} is a pragmatic approach to achieve high spectral efficiency
with binary error-correcting codes. Because of its inherent simplicity and
flexibility, as well as good performance, it is implemented in many
practical wireless communication
systems~\cite{IEEE80211-2012,ETSI_TS_136211-2013,
ETSI_EN_302_755_v131}.

A central part of a \gls{BICM} system is the demapper, which computes
soft information about the coded bits in the form of
so-called L-values. Ideally, L-values correspond to log-likelihood
ratios, in which case we refer to them as exact L-values. In practice,
however, the demapper often computes only approximate L-values due to
complexity reasons. A common approximation is to replace the log-sum
operation in the log-likelihood ratio computation with a max-log
operation.  This approximation can be motivated by the fact that at
high \gls{SNR}, exact and approximate max-log L-values are almost
identical. 

In this paper, we analyze achievable rates of \gls{BICM} for
nonbinary \gls{PAM} constellations over the
\gls{AWGN} channel, paying special attention to max-log L-values.
Traditionally, achievable rates for \gls{BICM} systems are analyzed
for exact L-values under the assumption of an ideal
interleaver~\cite{Caire98}, which results in the \gls{BICM} \gls{MI}
(i.e., the sum of $m$ bit-level MIs), often referred to as the
\gls{BICM} capacity. In~\cite{Martinez09}, it was proposed to analyze
 \gls{BICM} from a mismatched decoding perspective, showing
that the maximum achievable rate for a BICM system is lowerbounded by
the \gls{BICM} \gls{GMI}, without invoking any interleaver assumption.
For exact L-values, the \gls{BICM} \gls{GMI} coincides
with the BICM \gls{MI} \cite{Martinez09}.

When max-log L-values are considered, most of the previous work
concentrates on the correction of the ``suboptimal'' L-values in order
to either maximize the BICM \gls{GMI}~\cite{Jalden10, Nguyen11} or
minimize the error probability~\cite{Szczecinski12a}. To the best of
our knowledge, a rigorous comparison of achievable rates in terms of
the BICM MI and the BICM GMI for max-log L-values has not yet been
carried out.  Despite this fact, it seems to be a common belief in the
literature that the calculation of max-log L-values is inherently an
information lossy operation. As an example, when discussing the
\gls{MI} between the transmitted information symbol and the vector of
max-log L-values at the output of the demapper, \cite[p.
137]{Stierstorfer09_Thesis} concludes that ``the approximation clearly
constitutes a lossy procedure and entails an inferior BICM capacity''.
Similar implicit assumptions are made in~\cite{Alvarado07d}
and~\cite{Szczecinski07f}. We prove that this conclusion is not always true. In particular, we prove
that for symmetric $4$-\gls{PAM} constellations labeled with the
\gls{BRGC}, no information loss occurs when comparing exact and
max-log L-values, i.e., the BICM MI is the same in both cases. We also
prove that for all other combinations of \gls{PAM} constellations and
labelings, the max-log approximation indeed induces an information
loss. 

We then study the BICM \gls{GMI} for max-log L-values. In particular,
the so-called ``harmonized'' \gls{GMI} was introduced
in~\cite{Nguyen11} as an achievable rate for a modified BICM decoder
that applies scaling factors to the L-values. In this paper, we argue
that the L-value scaling is in fact unnecessary, and the harmonized
GMI (which can be seen as the sum of $m$ bit-level GMIs) is achievable
without any modifications to the decoder. Finally, we analyze two
common processing techniques which are often used in the theoretical
analysis of BICM systems: bit-level channel symmetrization and channel
mixing. The results show that these operations do not affect the BICM
GMI but can reduce the BICM MI.

The results presented in this paper can be easily generalized to
multi-dimensional product constellations of $M$-PAM of not necessarily
the same size 
labeled with a product labeling~\cite[Sec.~X]{Agrell04dec}.

\subsection{Notation}


Throughout the paper, boldface letters $\boldsymbol{x}$ denote row
vectors, blackboard letters $\mathbb{X}$ denote matrices, and capital
letters $X$ denote \glspl{RV}. $\mathbf{1}_{n}$ and $\mathbf{0}_{n}$
denote all-one and all-zero vectors of length $n$, respectively.
Calligraphic letters $\mathcal{X}$ denote sets, where $\realR$ stands
for the set of real numbers and $\mathbb{N}$ for the set of
natural numbers. For $n \in \mathbb{N}$, we define $[n] = \{1, 2, \dots,
n\}$. We define $\BinSet = \{0,1\}$. If $b \in \BinSet$, then
$\check{b} = (-1)^{(b+1)} \in \{-1, +1\}$ and $\bar{b} = 1 - b$.
$\Expect{\cdot}$ denotes expectation and $\Pr{\cdot}$ represents
probability. The \gls{PDF} of a continuous \gls{RV} $Y$ is denoted by
$f_{Y}(\cdot)$ and the conditional \gls{PDF} by $f_{Y|X}(\cdot |
\cdot)$. The \gls{PMF} of a discrete \gls{RV} $X$ is denoted by
$p_{X}(\cdot)$.

\section{System Model}

\begin{figure}
	\centering
	\includegraphics{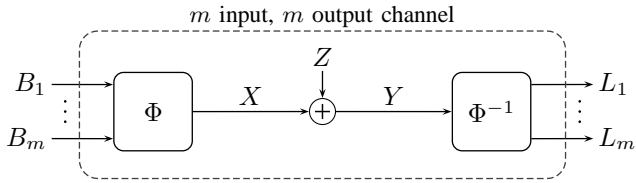}
	\caption{Block diagram of the analyzed system.}
	\label{fig:block_diag}
\end{figure}

A block diagram of the considered system model, which we discuss in
the following, is shown in Fig.~\ref{fig:block_diag}. 

\subsection{Modulator}\label{sec:modulator}

A modulator $\Phi$ is fed with $m$ bits $B_j$, $j \in [m]$, and
maps them to one of $M = 2^m$ possible constellation points. We
consider one-dimensional \gls{PAM} constellations denoted by
$\mathcal{S} = \{a_1,\dots, a_M\}$, where $a_1 < \dots < a_M$.
We say that the constellation is symmetric (around $y_0$) if
$a_{k} = -a_{M-k + 1} + 2y_0$ for $k \in [M]$ and some $y_0 \in
\mathbb{R}$, and we say that the constellation is equally spaced if
$a_{k+1} - a_{k}$ is independent of $k$. The bits are assumed to be
independent and distributed according to $p_{B_j}(u) =
1/2$,\,$\forall j$ and $u \in \mathcal{B}$. Thus, the symbols are
equiprobable, i.e., $p_{X}(a_k) = 1/M$,\, $\forall k  \in [M]$.
The constellation is assumed to be normalized to unit average energy
$\Expect{X^2}= (1/M)\sum_{k=1}^M{a_k^2} = 1$.

The mapping $\{0, 1\}^m \rightarrow \mathcal{S}$ performed by the modulator is assumed to
be one-to-one and is defined by a binary labeling. The binary labeling
is specified by an $m \times M$ binary matrix $\LL$, where the $k$th
column of $\LL$ is the binary label of the constellation point $a_k$.
Furthermore, we define $\setS_{j,u} = \{a_k \in \setS: \LL_{j,k} = u,
\forall k  \in [M]\}$ as the subconstellation consisting of all points labeled
with the bit $u$ in the $j$th bit position.

Certain quantities, such as the L-values we define below, depend only
on the subconstellations $\setS_{j,0}$ and $\setS_{j,1}$, i.e., they
depend only on the $j$th row in $\LL$. We refer to the $j$th row of
$\LL$ as a bit pattern, or simply pattern, which was shown in
\cite{Ivanov13a} to be a useful tool for analyzing binary labelings. A
pattern is defined as a vector $\boldsymbol{p}_j = [p_1, \dots, p_{M}]
\in \mathcal{B}^M$ with Hamming weight $M/2$. A labeling $\LL$ can then be
represented by $m$ different patterns, each corresponding to one row of $\LL$.
We define two trivial operations that can be applied to a pattern. A
\emph{reflection} of $\boldsymbol{p}$ is defined as $\boldsymbol{p}' =
\mathrm{refl}{(\boldsymbol{p})}$ with $p'_k = p_{M +1 - k}$. An
\emph{inversion} of $\boldsymbol{p}$ is defined as $\boldsymbol{p}' =
\mathrm{inv}{(\boldsymbol{p})}$ with $p'_k = \bar{p}_k$.  We say
that a pattern is symmetric if $\boldsymbol{p} =
\mathrm{refl}(\boldsymbol{p})$. A pattern $\boldsymbol{p}'$ that is
related to another pattern $\boldsymbol{p}$ via inversions and/or
reflections is said to be equivalent to $\boldsymbol{p}$. Analogously,
labelings related by trivial operations (i.e., row permutations,
inversion, and/or reflection of all patterns in the labeling) are said
to be equivalent~\cite[Definition~6b]{Agrell04dec}. For example, there
exist eight labelings that are equivalent to the \gls{BRGC} for
$4$-PAM shown in Fig.~\ref{fig:4-pam_brgc}(a). For symmetric
constellations, equivalent patterns and labelings behave similarly,
e.g., they give the same uncoded~\gls{BER} and achievable rates. 

Most of the numerical examples are presented for an equally spaced
4-PAM constellation, shown in
Fig.~\ref{fig:4-pam_brgc}\subref{fig:subplot_4pam1}--\subref{fig:subplot_4pam2},
together with the two labelings
\begin{align}
	\LL_1 = \begin{bmatrix}
		0 & 0 & 1 & 1 \\
		0 & 1 & 1 & 0 
	\end{bmatrix}, \qquad
	\LL_2 = \begin{bmatrix}
		0 & 0 & 1 & 1 \\
		0 & 1 & 0 & 1 
	\end{bmatrix}\EqCo
\end{align}
which are often referred to as the \gls{BRGC} and the \gls{NBC} or
set-partitioning labeling~\cite{Unger82jan}, respectively. An example
of an equally spaced $8$-PAM constellation labeled with the \gls{BRGC}
is shown in~Fig.~\ref{fig:4-pam_brgc}\subref{fig:subplot_8pam}.

\begin{figure}
	\centering
	\subfloat[$4$-PAM with BRGC.]{\label{fig:subplot_4pam1}
	\includegraphics{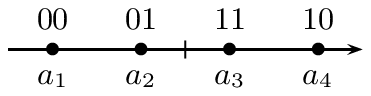}
	}
	\qquad
	\subfloat[$4$-PAM with NBC.]{\label{fig:subplot_4pam2}
	\includegraphics{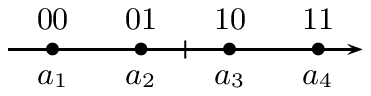}
	}
	
	\subfloat[$8$-PAM with BRGC.]{\label{fig:subplot_8pam}
	\includegraphics{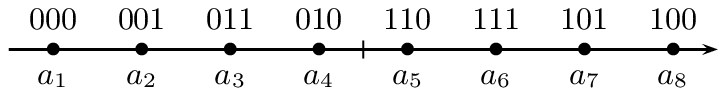}
	}
	\caption{Examples of equally spaced PAM constellations with different binary labelings.}
	\label{fig:4-pam_brgc}
\end{figure}

\subsection{AWGN Channel}

The constellation points are assumed to be transmitted over the discrete-time
memoryless \gls{AWGN} channel with output $Y = X + Z$, where the noise
$Z$ is a zero-mean Gaussian \gls{RV} with variance
$\Expect{Z^2} = N_0/2$. The conditional \gls{PDF} of the channel
output given the channel input is
\begin{equation}
    f_{Y|X}(y|x) = \sqrt{\frac{\SNR}{\pi}} e^{-\SNR(y -
		x)^2}\EqCo
    \label{eq:gauss}
\end{equation}
where $\SNR = {1}/{N_0}$ is the average \gls{SNR}.

\subsection{Demappers and L-values}
Two demappers $\Phi^{-1}$ are considered at the receiver. The first one
calculates \emph{exact} L-values as the log-likelihood
ratios
\begin{equation}
	\lex_j(y) = \log \frac{f_{Y|B_j}(y | 1)}{f_{Y|B_j}(y | 0)} = 
	\log{ \frac{\sum_{x \in \setS_{j,
	1}}{e^{-\SNR(y-x)^2}}}{\sum_{x \in \setS_{j,
	0}}{e^{-\SNR(y-x)^2}}}}.
    \label{eq:Lvalue_exact}
\end{equation}
The second demapper calculates \emph{max-log}
L-values using the max-log approximation as~\cite{Viterbi98feb}
\begin{equation}
	\lml_j(y) = \SNR\left[\min_{x \in \setS_{j,0}}{(y-x)^2} - \min_{x \in \setS_{j, 1}}{(y-x)^2}\right].
    \label{eq:Lvalue_maxlog}
\end{equation} 
The observation $Y$ is an \gls{RV} and thus, so are the
L-values. To simplify the notation, we use $L_j = \Lex_j = \lex_j(Y)$
when discussing exact L-values and $L_j = \Lml_j = \lml_j(Y)$ when
discussing max-log L-values. We further define the vector $\bL = [L_1,
\dots, L_m]$ and write $\bLex$ and $\bLml$ when discussing exact and
max-log L-values, respectively. 

\begin{figure}[t]
	\centering
	\subfloat[$\rho = 0$ dB.]{\includegraphics{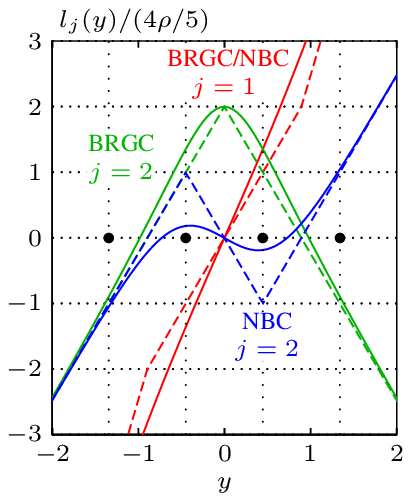}}
	\quad
	\subfloat[$\rho = 6$ dB.]{\includegraphics{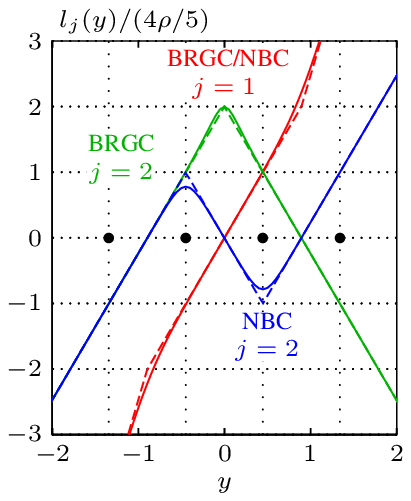}}

	\caption{Normalized exact (solid) and max-log (dashed) L-values as
	functions of the observation $y$ assuming an equally spaced 4-PAM constellation
	labeled with the BRGC and NBC (see Fig.~\ref{fig:4-pam_brgc}(a)--(b)).}
	\label{fig:lvalues}
\end{figure}

In Fig.~\ref{fig:lvalues}, we show an example of the exact and max-log
L-values (normalized by $4\rho/5$) as functions of the observation $y$
for the $4$-PAM constellation and labelings shown in
Fig.~\ref{fig:4-pam_brgc}\subref{fig:subplot_4pam1}--\subref{fig:subplot_4pam2}
and two different values of $\rho$. As shown in~\cite{Alvarado07d},
the max-log L-value is a piecewise linear function of the observation,
which simply scales with \gls{SNR}, whereas the dependency of the
exact L-value on the \gls{SNR} is nonlinear. However, when the
\gls{SNR} increases, one can show that the exact L-value approaches
the max-log L-value, in the sense that $\lim_{\rho \to \infty}
\lex_j(y)/\rho = \lml_j(y)/\rho$ $ \forall y \in
	\mathbb{R}$, where $\lml_j(y)/\rho$ is
independent of $\rho$ (see \eqref{eq:Lvalue_maxlog}).

From Fig.~\ref{fig:lvalues}, one can observe that the exact
L-value for the second bit position of the BRGC (and also the max-log
L-value) is an even function assuming an equally spaced 4-PAM
constellation. More generally, we have the following result, which
will be used later on.

\begin{lemma}
	\label{lemma:exact_symmetric}
	The exact L-value $\lex_j(y)$ is symmetric with respect to $y_0 \in
	\mathbb{R}$, i.e., $\lex_j(y_0+y) = \lex_j(y_0-y)$ for $y \in
	\mathbb{R}$, if and only if the constellation is symmetric around
	$y_0$ and the pattern corresponding to the $j$th bit position
	satisfies $\boldsymbol{p} = \mathrm{refl}(\boldsymbol{p})$. 
\end{lemma}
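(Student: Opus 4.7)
The plan is to prove the ($\Leftarrow$) direction by a direct symmetry argument, and for ($\Rightarrow$) to introduce the mirrored subsets $\setS_{j,u}^{*} := \{2y_0 - x : x \in \setS_{j,u}\}$ and use uniqueness of Gaussian mixtures to force $\setS_{j,u}^{*} = \setS_{j,u}$. For ($\Leftarrow$), if $\setS$ is symmetric around $y_0$ and $\boldsymbol{p}_j = \mathrm{refl}(\boldsymbol{p}_j)$, then each $\setS_{j,u}$ is closed under $x \mapsto 2y_0 - x$, so by Gaussian evenness $\sum_{x \in \setS_{j,u}} e^{-\rho(y_0+t-x)^2} = \sum_{x \in \setS_{j,u}} e^{-\rho(y_0-t-x)^2}$ for both $u$, and the log-ratio in \eqref{eq:Lvalue_exact} inherits the symmetry.

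For ($\Rightarrow$), the starting point is the identity $e^{-\rho(y_0+t-x)^2} = e^{-\rho(y_0-t-(2y_0-x))^2}$ from \eqref{eq:gauss}. Summing over $x \in \setS_{j,u}$ shows that each sum in \eqref{eq:Lvalue_exact} evaluated at $y_0+t$ coincides with the analogous sum computed over $\setS_{j,u}^{*}$ evaluated at $y_0-t$. Writing $\lex_j^{*}$ for the exact L-value obtained by substituting $\setS_{j,u}^{*}$ for $\setS_{j,u}$, this and the hypothesis $\lex_j(y_0+t) = \lex_j(y_0-t)$ yield $\lex_j(y) = \lex_j^{*}(y)$ for all $y \in \mathbb{R}$. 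Cross-multiplying and applying $e^{-\rho(y-a)^2} e^{-\rho(y-b)^2} = e^{-\rho(a-b)^2/2}\, e^{-2\rho(y-(a+b)/2)^2}$ turns both sides into finite Gaussian mixtures in $y$: the LHS indexed by pairs $(x_1, x_0') \in \setS_{j,1} \times \setS_{j,0}^{*}$ and the RHS by pairs $(x_0, x_1') \in \setS_{j,0} \times \setS_{j,1}^{*}$, each term contributing a Gaussian centered at the corresponding midpoint with weight $e^{-\rho(x_1-x_0')^2/2}$ or $e^{-\rho(x_0-x_1')^2/2}$.

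By uniqueness of Gaussian-mixture representations (linear independence of Gaussians at distinct centers), the two weighted (center, weight) multisets must agree. For an LHS pair $(x_1, x_0')$ matched with an RHS pair $(x_0, x_1')$, agreement of centers and of weights requires $x_1 + x_0' = x_0 + x_1'$ together with $(x_1 - x_0')^2 = (x_0 - x_1')^2$. The sub-case $x_1 - x_0' = x_0 - x_1'$ forces $x_1 = x_0$, contradicting the disjointness of $\setS_{j,0}$ and $\setS_{j,1}$; the remaining sub-case forces $x_0 = x_0'$ and $x_1 = x_1'$. Running this over all pairs and using equal cardinalities then gives $\setS_{j,u}^{*} = \setS_{j,u}$ for both $u$, which says each subconstellation is symmetric about $y_0$, whence both the constellation symmetry and $\boldsymbol{p}_j = \mathrm{refl}(\boldsymbol{p}_j)$ follow.

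The main obstacle is the case analysis when several distinct pairs share a common midpoint, since Gaussian-mixture uniqueness equates only the \emph{aggregate} weight per center rather than individual pair contributions. I expect to resolve this by applying the matching argument first at an extremal midpoint (which is attained by a single pair), peeling off the resulting pair of symmetric constellation points, and iterating the argument on the reduced configuration.
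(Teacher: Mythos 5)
Your forward direction is fine, and your converse strategy is the paper's own argument in different clothing: cross-multiplying $\lex_j=\lex_j^{*}$ and expanding each product of Gaussians is exactly the paper's identity \eqref{eq:identity3}, with your midpoints $(x_1+x_0')/2$ indexing the same exponential family that the paper compares via dominant exponents as $y\to\infty$, and linear independence of equal-variance Gaussians at distinct centers playing the role of that dominant-term comparison. Your first extremal step is correct and recovers \eqref{eq:absym}: the center/weight matching at the unique maximal midpoint forces the outermost point of each subconstellation to be the mirror image of its innermost point.

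The gap is the iteration, which you flag but do not close, and which is precisely where the paper spends the bulk of Appendix~A. Neither reading of ``peel off and iterate'' works. If you peel off the extremal \emph{constellation points}, the mixture identity for the reduced constellation does not follow from the one for the full constellation: a cross pair such as $(\max\setS_{j,1},\,x_0')$ with $x_0'$ retained finds a matching partner on the other side only if the symmetry you are trying to prove already holds, so there is no ``reduced configuration'' to which the hypothesis applies. If instead you peel off only the two extremal \emph{Gaussian terms} (which is legitimate, since they sit alone at the maximal center with equal weights), then at the very next step the maximal surviving center can be attained by two pairs on one side --- e.g.\ $(\max\setS_{j,1},\,\text{2nd-max of }\setS_{j,0}^{*})$ and $(\text{2nd-max of }\setS_{j,1},\,\max\setS_{j,0}^{*})$ may share a midpoint --- and by a different collection on the other, so mixture uniqueness yields only an aggregate-weight equation and your pairwise center-plus-weight matching no longer applies. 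The paper's resolution is the two-case contradiction argument of \eqref{eq:asym}--\eqref{eq:abba2}: assume a first index at which the $1$-labelled subconstellation fails to be symmetric, split according to whether the $0$-labelled subconstellation is fully symmetric or also has a first failure, cancel all already-matched terms, and show that each of the two possible identifications of the surviving dominant terms forces an extremal point to equal an interior one. Some argument of this strength (or another mechanism for breaking ties at shared centers) is required; as written, your proof establishes only the symmetry of the four outermost points.
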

\begin{proof}
	The proof is given in~Appendix~\ref{Appendix.theor:exact_symmetric}.
\end{proof}

\begin{remark}
	\label{remark:exact_symmetric_unique}
	Since the exact L-value $\lex(y)$ is not a periodic function, the
	symmetry point is unique, i.e., there cannot exist two distinct
	$y_0, y_0' \in \mathbb{R}$ such that both $\lex(y+y_0) =
	\lex(-y+y_0)$ and $\lex(y+y_0') = \lex(-y+y_0')$ hold for all $y \in
	\mathbb{R}$.  
\end{remark}

\begin{remark}
	\label{remark:maxlog_symmetric}
	It can be shown that Lemma \ref{lemma:exact_symmetric} holds without
	change also for the max-log L-value.
\end{remark}

\subsection{Coding Scheme}\label{sec:coding_general}

We consider a coding scheme where an information message is mapped to
a codeword $\boldsymbol{x} = [x_1, \dots, x_N]$, $x_i \in \setS$ for
$i \in [N]$, and $N$ corresponds to the number of (AWGN) channel uses.
The set of all possible codewords $\boldsymbol{x}$ is a nonbinary code
of length $N$. As the mapping $\Phi$ is one-to-one, an alternative
binary code $\setC$ of length $mN$ can be constructed. At
the transmitter, a binary codeword $\boldsymbol{b} \in \setC$ is
selected and at the receiver a length-$m N$ vector of L-values
$\boldsymbol{l}$ is calculated. With a slight abuse of notation, we
write $b_{j,i}$ and $l_{j,i}$ to denote the $j$th input bit to the
modulator and the $j$th L-value from the demapper in the $i$th channel
use, respectively. In the length-$mN$ vector $\boldsymbol{b}$ (and
similarly for $\boldsymbol{l}$) $b_{j,i}$ corresponds to the entry
$b_{N(j-1)+i}$. This way, all input bits that correspond to a
particular bit position appear consecutively in $\boldsymbol{b}$,
i.e., $\boldsymbol{b} = [\dots, b_{j,1}, b_{j,2},\dots ,b_{j,N},
b_{j+1,1},\dots]$.

The standard \gls{BICM} decoder we consider in this paper is defined
as~\cite[eq.~(3)]{Martinez09}
\begin{equation}
\label{eq:bicm_decoder}
\hat{\boldsymbol{b}} = \argmax{\boldsymbol{b} \in \setC}{\sum_{i=1}^{N}
\sum_{j=1}^{m} b_{j,i} l_{j,i}} \EqCo
\end{equation}
i.e., the decoder finds the codeword that maximizes the correlation
with the vector of the observed L-values. The codeword error
probability is defined as $\Pe = \mathrm{Pr}(\hat{\boldsymbol{B}}
\neq \boldsymbol{B})$. 

To simplify the notation, one of the indices $i$, $j$ may be omitted
depending on the discussed context. To avoid ambiguity, in the
rest of the paper, the following convention applies: the index $j
\in [m]$ is used to indicate the bit position and the index $i \in
[N]$ is used to indicate the time instant.

\section{Bit-Level Analysis}
\label{sec:patterns}

\begin{figure}[t]
	\centering
	\begin{center}
		\includegraphics{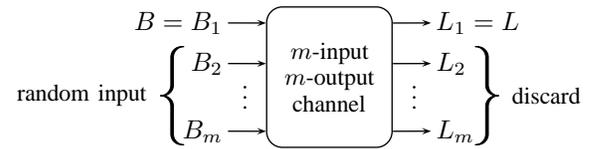}
	\end{center}
	\caption{Bit-level channel, illustrated for the first bit position.}
	\label{fig:bit_level}
\end{figure}

When analyzing achievable rates of \gls{BICM}, it is common to proceed
with a parallel independent channel model and assume that there exist
$m$ independent bit channels from $B_j$ to $L_j$. This assumption is
typically motivated by the insertion of the so-called ``ideal
interleaver''~\cite[Sec.~II-B]{Caire98}. In this paper, we use a
different approach which does not rely on any interleaver or
independence assumption between bit channels. We reduce the $m$-input
$m$-output channel in Fig.~\ref{fig:block_diag} to a channel with only
one binary input and one continuous output. This can be done by
specifying a behavioral model for the other, unused bit positions. To
that end, consider the hypothetical scenario where we are only
interested in transmitting data from $B_j$ to $L_j$. To do so, we may
feed the modulator at all other bit positions $j' \neq j$ with
\gls{iud} bits. If the \gls{iud} condition is not satisfied, the
symbols are not equiprobable and the system model assumptions
in~\secref{sec:modulator} are violated. At the receiver side, we
discard all L-values except the one of interest. This is conceptually
shown in Fig.~\ref{fig:bit_level} for $j=1$ leading to a binary-input,
continuous-output channel from $B_1$ to $L_1$. 

Since the results in this section are not dependent on any particular
$j$, the bit-level index $j$ is dropped. Definitions and equations
that hold for both exact and max-log L-values will be stated with the
generic placeholder variable $L$. As an example, the generic bit-level
channel in~\figref{fig:bit_level} is denoted by
$f_{L|B}(\cdot|\cdot)$.  For exact L-values, the channel is then
denoted by $f_{\Lex|B}(\cdot|\cdot)$. Note that this conditional PDF
is hard to calculate in general \cite[Ch.~4]{Fabregas08_Book}. For max-log
L-values, the channel is denoted by $f_{\Lml|B}(\cdot|\cdot)$. This
conditional PDF is relatively easy to obtain due to the special form
of~\eqref{eq:Lvalue_maxlog} and corresponds to a summation
of piecewise Gaussian functions (see \cite{Alvarado07d} and references therein).

\subsection{Bit-Level Coding Scheme} \label{sec:bit_level_coding}

The coding scheme for the bit-level channel in~\figref{fig:bit_level}
is obtained from the one described in~\secref{sec:coding_general} by simply
omitting irrelevant bit positions. Let $\setC \subset \mathcal{B}^N$
denote a binary code of length $N$ and rate $R =
\log_2(|\setC|)/N$. The decoder in~\eqref{eq:bicm_decoder} then
reduces to 
\begin{align}
	\boldsymbol{\hat{b}} = \argmax{\boldsymbol{b} \in \setC}{
	\sum\limits_{i=1}^N b_i l_i}\EqCo
	\label{eq:decoder_general}
\end{align}
where the $l_i$ are either exact or max-log L-values.

\subsection{Achievable Rates}

\subsubsection{Generalized Mutual Information}

The maximum achievable rate for the decoder
in~\eqref{eq:decoder_general} is lowerbounded by
\cite[eq.~(25)]{Ganti00}\footnote{To obtain \eqref{eq:gmi_general}
from \cite[eq.~(25)]{Ganti00}, the decoding metric in \cite{Ganti00}
is chosen as $d(\check{B},L) = -\check{B}L$ together with $a(\check{B}) = 0$, where the minus sign comes from the fact that the metric in \cite{Ganti00} is minimized, whereas it is maximized in~\eqref{eq:decoder_general}.}
(see also~\cite[eq.~(18)]{Nguyen11})
\begin{align}
	\GMI_{L} = 1 - \inf_{s\geq0} \Expect{\log_2 \left(1 + e^{-s\check{B} L}\right)} \EqCo
	\label{eq:gmi_general}
\end{align}
which was originally introduced in~\cite{Merhav1994} for discrete memoryless channels.
We refer to this quantity as the bit-level GMI or simply GMI. 
 The \gls{GMI} has the following operational meaning. There exists a binary
code $\setC$ with rate arbitrarily close to $\GMI_{L}$ that can
achieve reliable communication (i.e., $\Pe < \eps$ for $\eps$ as small
as desired) as $N\to \infty$ over the channel from $B$ to $L$
\emph{assuming the decoder in~\eqref{eq:decoder_general}}.  The
codewords of such a code $\setC$ are composed of \gls{iud}
bits~\cite{Ganti00} and such codes are called \gls{iud} codes.

\subsubsection{Mutual Information}

Lifting the decoder assumption, the largest achievable rate for the channel
in~\figref{fig:bit_level} is given by the \gls{MI} between $B$ and $L$
defined as \cite[p.~251]{Cover2006second}
\begin{equation}\label{eq:bit_level_mi}
	\MI\MyArg{L} = I(B;L) = \Expect{\log_2
	\left(\frac{f_{L|B}(L|B)}{f_{L}(L)} \right)}.
\end{equation}
The \gls{MI} has a similar operational meaning as the \gls{GMI}, but
does not make any restrictions regarding the decoder structure.  In
particular, for the channel from $B$ to $L$, there exists a binary
code $\setC$ with rate arbitrarily close to $\MI_L$ that can
achieve reliable communication as $N \to \infty$. Furthermore, the
\gls{MI} is the \emph{maximum} achievable rate.  Note that both the
GMI and the MI are functions of the \gls{SNR}.  However, to simplify
the notation, we omitted the dependence on $\rho$. 

\subsection{L-values}

\subsubsection{Exact L-values}

For exact L-values, the decoder~\eqref{eq:decoder_general} corresponds
to the maximum-likelihood decoder for the channel $f_{\Lex |B}(\cdot
|\cdot)$. This explains that for exact L-values, the \gls{GMI} is
equivalent to the \gls{MI}. In fact, the \gls{MI} for exact L-values
can alternatively be written as
\begin{align}
	\MI_{\Lex} &= I(B; \Lex) = I(B;Y)  = \GMI_{\Lex}
	\label{eq:mi_ex}
\end{align}
and the infimum in~\eqref{eq:gmi_general} is achieved for $s = 1$, as
shown in~\cite[Cor.~1]{Martinez09}.

\subsubsection{Max-Log L-values}

For max-log L-values, the \gls{MI} is given by
\begin{equation}
	\MI_{\Lml} = I(B; \Lml).
\end{equation}
Unlike for exact L-values, $\MI_{\Lml} \ge \GMI_{\Lml}$. This is
because max-log L-values are not true log-likelihood ratios for the
channel $f_{\Lml|B}(\cdot|\cdot)$ and hence, the decoder
in~\eqref{eq:decoder_general} does not correspond to a maximum-likelihood decoder. However, applying different functions to $\Lml$
may increase the corresponding \gls{GMI}, which is in sharp contrast
to the \gls{MI} and the data processing inequality \cite[Th.~2.8.1]{Cover2006second}. In~\cite[Th.~1]{Jalden10} (see also \cite[Th.~7.5]{Alvarado15_Book}), it is shown that
$\MI_{\Lml} = \GMI_{g(\Lml)}$ for 
\begin{align}\label{eq:optimal_correction}
	g(l)  
	& = \log\left(\frac{f_{\Lml|B}(l|1)}{f_{\Lml|B}(l|0)}\right).
\end{align}
The intuitive interpretation is that the processing
in~\eqref{eq:optimal_correction} matches the metrics to the
decoder~\eqref{eq:decoder_general}, and hence, makes the decoder a
maximum-likelihood decoder.

We can compare the discussed achievable rates in the form of the
following chain of inequalities 
\begin{align}
\label{ineq.pattern}
\GMI_{\Lex} = \MI_{\Lex} \overset{(a)}{\geq} \MI_{\Lml} =
\GMI_{g(\Lml)} \geq \GMI_{\Lml}\EqCo
\end{align}
where inequality (a) follows from the data processing inequality. As
mentioned in~\secref{sec:intro}, it is commonly assumed that
inequality (a) is strict. In the next section, we show that this
inequality is in fact an equality in some cases.

\subsection{Lossless Max-Log Approximation}\label{sec:lossless_patterns}

We start with the following lemma. 

\begin{lemma}\label{lemma:min_suf_stat}
	For any one-dimensional constellation and any pattern, $\MI_{\Lex} =
	\MI_{\Lml}$ if and only if there exists a function $f(\cdot)$ such
	that $\Lex = f(\Lml)$.
\end{lemma}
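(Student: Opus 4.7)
My plan rests on the fact that the exact L-value $\Lex$, being the log-likelihood ratio $\log[f_{Y|B}(Y|1)/f_{Y|B}(Y|0)]$, is a sufficient statistic for $B$ from $Y$; in particular $\MI_{\Lex} = I(B;\Lex) = I(B;Y)$, which is already recorded in \eqref{eq:mi_ex}. Since $\Lml = \lml(Y)$ is also a deterministic function of $Y$, the data processing inequality gives $\MI_{\Lml} \le \MI_{\Lex}$, so the lemma characterizes when this inequality is tight.

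For the \emph{if} direction, suppose $\Lex = f(\Lml)$. Then $\Lex$ is determined by $\Lml$ alone, so $B \to \Lml \to \Lex$ is a Markov chain, and the data processing inequality yields $\MI_{\Lex} \le \MI_{\Lml}$. Combined with $\MI_{\Lml} \le \MI_{\Lex}$, we obtain $\MI_{\Lex} = \MI_{\Lml}$. This direction is the easy half.

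For the \emph{only if} direction, suppose $\MI_{\Lex} = \MI_{\Lml}$. Since $\MI_{\Lex} = I(B;Y)$, this gives $I(B;\Lml) = I(B;Y)$. Because $\Lml$ is a function of $Y$, the chain $B \to Y \to \Lml$ is Markov; the tightness of the data processing inequality then forces the reverse chain $B \to \Lml \to Y$ to be Markov as well, i.e., $\Lml$ is itself a sufficient statistic. Equivalently, $\Pr{B=1 \mid Y=y} = \Pr{B=1 \mid \Lml = \lml(y)}$ for (almost) every $y$. For equiprobable binary $B$ the posterior is the sigmoid of the log-likelihood ratio, $\Pr{B=1\mid Y=y} = e^{\lex(y)}/(1+e^{\lex(y)})$, which is strictly monotone in $\lex(y)$; inverting the sigmoid gives
\begin{equation}
\lex(y) \;=\; \log\frac{\Pr{B=1\mid \Lml = \lml(y)}}{\Pr{B=0\mid \Lml = \lml(y)}}\EqCo
\end{equation}
which expresses $\lex(y)$ as a function of $\lml(y)$, so $\Lex = f(\Lml)$ as required.

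The main obstacle is the standard but delicate converse-to-data-processing step, namely that equality $I(B;\Lml)=I(B;Y)$ with $\Lml$ a function of $Y$ implies $\Lml$ is sufficient. Technically, this is an almost-everywhere statement with respect to $f_Y$; a minor additional argument is needed to turn it into an everywhere identity, but since $\lex(\cdot)$ is analytic on $\realR$ (as a log-sum-exp of Gaussians) and $\lml(\cdot)$ is continuous piecewise linear, agreement of the induced functional relationship on a set of full Lebesgue measure extends to the relevant support, so no pathological behaviour arises.
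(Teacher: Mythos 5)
Your proof is correct and follows essentially the same route as the paper's: the ``if'' direction is the data processing inequality, and the ``only if'' direction rests on the fact that equality of the mutual informations forces $\Lml$ to be a sufficient statistic, whence the exact L-value---being a minimal sufficient statistic, which you make explicit as the log-posterior-odds---must be a function of it. The paper phrases this last step via Fisher's factorization theorem rather than via the posterior identity, but the underlying argument is the same.
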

\begin{proof}
	The ``if'' part follows from the data processing inequality. The
	``only if'' follows from the fact that exact L-values form a
	\emph{minimal} sufficient statistic for guessing $B$ based on $Y$.
	A minimal sufficient statistic is a function of every other
	sufficient statistic. In particular, assume $\lml(y) = \lml(y')$
	for two channel observations $y$ and $y'$. If $\MI_{\Lml} =
	\MI_{\Lex}$ (and hence max-log L-values also form a sufficient
	statistic), it follows from Fisher's factorization
	theorem~\cite[Ch.~22.3]{Lapidoth2009} that $f_{Y|B}(y|b) /
	f_{Y|B}(y'|b)$ is independent of $b$, which implies $\lex(y) =
	\lex(y')$. Thus, there has to exist a function $f(\cdot)$ such that
	$\Lex = f(\Lml)$. 	
\end{proof}

Based on this lemma, we have the following theorem.

\begin{theorem}\label{theor:lossless_patterns}
	For one-dimensional $M$-PAM constellations, there exist only two
	cases for which the max-log approximation is information lossless,
	i.e., $\MI_{\Lex} = \MI_{\Lml}$. Either the pattern is equivalent to
	$\boldsymbol{p}_{\mathrm{I}} = [\mathbf{0}_{M/2},
	\mathbf{1}_{M/2}]$, in which case the constellation can be
	arbitrary, or the pattern is equivalent to
	$\boldsymbol{p}_{\mathrm{II}} = [\mathbf{0}_{M/4}, \mathbf{1}_{M/2},
	\mathbf{0}_{M/4}]$ and the constellation is symmetric. 
\end{theorem}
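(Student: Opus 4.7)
The plan is to invoke Lemma~\ref{lemma:min_suf_stat}, which turns the claim $\MI_{\Lex}=\MI_{\Lml}$ into the deterministic statement that $\lex$ must be a function of $\lml$. I will prove sufficiency by implicitly constructing such a function, and necessity by producing $y\neq y'$ with $\lml(y)=\lml(y')$ yet $\lex(y)\neq\lex(y')$.

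\emph{Sufficiency.} For Pattern~I, the subconstellation $\setS_0=\{a_1,\dots,a_{M/2}\}$ lies entirely to the left of $\setS_1=\{a_{M/2+1},\dots,a_M\}$. Every affine piece of $\lml$ has slope $2\rho(x_1^\star-x_0^\star)$, where $x_0^\star\in\setS_0$ and $x_1^\star\in\setS_1$ are the local minimizers; since $x_1^\star>x_0^\star$ always, each slope is positive and $\lml$ is strictly increasing on $\realR$, hence injective, so $\lex$ is trivially a function of $\lml$. For Pattern~II with a constellation symmetric about $y_0$, Lemma~\ref{lemma:exact_symmetric} and Remark~\ref{remark:maxlog_symmetric} give that both $\lex$ and $\lml$ are symmetric about $y_0$. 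The same slope argument restricted to $[y_0,\infty)$, where the closest element of $\setS_0=\{a_1,\dots,a_{M/4}\}\cup\{a_{3M/4+1},\dots,a_M\}$ always lies in the right cluster and hence to the right of every element of $\setS_1=\{a_{M/4+1},\dots,a_{3M/4}\}$, shows that every affine piece there has strictly negative slope; thus $\lml$ is strictly monotonic on $[y_0,\infty)$. Consequently $\lml(y)=\lml(y')$ forces $y'\in\{y,\,2y_0-y\}$, and in either case the symmetry of $\lex$ yields $\lex(y)=\lex(y')$.

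\emph{Necessity.} For a pattern not in either privileged class, I reduce by trivial equivalences and split on $(p_1,p_M)$. If $(p_1,p_M)=(0,1)$ and the pattern differs from $\boldsymbol{p}_{\mathrm{I}}$, there must exist an interior switch $p_k=1$, $p_{k+1}=0$ producing an affine piece of $\lml$ of negative slope sandwiched between the two positive-slope tails; hence $\lml$ is non-monotone and some level is attained at distinct $y<y'$. If $(p_1,p_M)=(0,0)$, the tails of $\lml$ have opposite signs, so $\lml$ is ``hat''-shaped and every near-maximal level is attained at a pair. In each situation I then verify $\lex(y)\neq\lex(y')$: if the pattern is not reflection-symmetric or the constellation is not symmetric, Lemma~\ref{lemma:exact_symmetric} together with Remark~\ref{remark:exact_symmetric_unique} prevent $\lex$ from admitting any axis of reflection symmetry, so it cannot obey the identification that $\lml(y)=\lml(y')$ tries to impose; in the remaining subcase---pattern and constellation both symmetric but pattern distinct from $\boldsymbol{p}_{\mathrm{II}}$---I argue that the additional interior ``blocks'' of the pattern generate an affine piece of $\lml$ of the wrong slope sign within one half of $\realR$, yielding a non-symmetry-related coincidence that the real-analyticity of $\lex$, combined with its prescribed asymptotic slopes at $\pm\infty$, prevents it from matching. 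The main obstacle is precisely this last subcase: ruling out every reflection-symmetric weight-$M/2$ pattern other than $\boldsymbol{p}_{\mathrm{II}}$ requires careful slope bookkeeping of the piecewise-linear $\lml$ inside one half-line together with a strict-inequality argument for the analytic $\lex$ at the candidate coincidence points.
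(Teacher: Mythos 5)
Your sufficiency argument is sound and coincides with the paper's: strict monotonicity of $\lml(y)$ for $\boldsymbol{p}_{\mathrm{I}}$, and symmetry about $y_0$ plus strict monotonicity on a half-line for $\boldsymbol{p}_{\mathrm{II}}$ on a symmetric constellation. The necessity direction, however, has a genuine gap at its central step. Producing $y\neq y'$ with $\lml(y)=\lml(y')$ is easy (any non-monotone piecewise-linear function does this); the whole difficulty is showing $\lex(y)\neq\lex(y')$ for at least one such pair, and your argument for that is a non sequitur. The identification that $\lml$ imposes near two neighboring zero-crossings $y_q,y_r$ is the \emph{affine} pairing $\lml(y_q+\gamma)=\lml(y_r-\beta\gamma)$, where $\beta$ is the ratio of the two adjacent constellation gaps and is a priori different from $1$; it is not a reflection. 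Hence ``Lemma~\ref{lemma:exact_symmetric} and Remark~\ref{remark:exact_symmetric_unique} prevent $\lex$ from admitting any axis of reflection symmetry, so it cannot obey the identification'' does not follow --- you must first prove the pairing is forced to be a reflection. The paper does this in two steps you omit: (i) by Lemma~\ref{lemma:min_suf_stat} the identity $\lex(y_q+\gamma)=\lex(y_r-\beta\gamma)$ holds on an interval, and since $\lex$ is real-analytic it then holds for all $\gamma\in\realR$; (ii) differentiating $k$ times at the fixed point $y_0$ of the affine map gives $\lex{}^{(k)}(y_0)=(-\beta)^k\lex{}^{(k)}(y_0)$, so either $\beta=1$ or all derivatives vanish and $\lex$ is constant, which is absurd. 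Only once $\beta=1$ is established does Lemma~\ref{lemma:exact_symmetric} apply and force a symmetric constellation together with the reflection-symmetric three-block pattern, i.e., $\boldsymbol{p}_{\mathrm{II}}$.

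The subcase you yourself flag as ``the main obstacle'' --- symmetric constellation and reflection-symmetric pattern other than $\boldsymbol{p}_{\mathrm{II}}$ --- is exactly where your sketch stops, and slope bookkeeping of the piecewise-linear $\lml$ alone will not close it, because the obstruction lives in $\lex$. The paper disposes of it in one line: any pattern inducing more than two zero-crossings contains two distinct pairs of neighboring zero-crossings with distinct midpoints $y_0\neq y_0'$; the argument above forces $\lex$ to be reflection-symmetric about both, and Remark~\ref{remark:exact_symmetric_unique} (a non-periodic function has at most one symmetry axis) yields the contradiction. Reorganizing your case split from the values of $(p_1,p_M)$ to the number of zero-crossings of $\lml$ is what makes this one-line finish available.
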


\begin{proof}
	The proof is given in~Appendix~\ref{Appendix.theor:lossy_patterns}.
\end{proof}

In practice, \theref{theor:lossless_patterns} implies that, for the
two lossless cases, ``full'' information can be extracted from max-log
L-values if proper processing is applied, i.e., in the form of the correction function~\eqref{eq:optimal_correction}. In addition to the patterns
in~\theref{theor:lossless_patterns}, we denote the pattern
$\boldsymbol{p} = [0, 1, 0, 1, \dots, 0, 1]$ by
$\boldsymbol{p}_{\mathrm{III}}$. As an example, for $4$-PAM, there
exist only three patterns that are not equivalent, i.e.,
$\boldsymbol{p}_{\mathrm{I}} = [0, 0, 1, 1]$,
$\boldsymbol{p}_{\mathrm{II}}= [0, 1, 1, 0]$, and
$\boldsymbol{p}_{\mathrm{III}} = [0, 1, 0, 1]$. For a symmetric
$4$-PAM constellation, the first two patterns are lossless according
to~\theref{theor:lossless_patterns} and they correspond to the first
and the second bit position in the BRGC, respectively. For the
\gls{NBC}, the first and second bit positions correspond to the
patterns $\boldsymbol{p}_{\mathrm{I}}$ and
$\boldsymbol{p}_{\mathrm{III}}$, respectively (see
Fig.~\ref{fig:4-pam_brgc}). From~\theref{theor:lossless_patterns}, the
first bit position is again information lossless (even if the
constellation is not symmetric) while the second one is not. In fact,
we immediately have the following corollary.

\begin{corollary}
	\label{theorem:lossless_labelings}
	Among all possible combinations of one-dimensional $M$-PAM
	constellations and labelings, a symmetric $4$-PAM constellation
	with the \gls{BRGC} (or any equivalent labeling) is the only case
	where all bit positions are information lossless. 
\end{corollary}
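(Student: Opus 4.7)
The plan is to combine \theref{theor:lossless_patterns} with a short combinatorial argument on the structure of $\LL$. For every bit position to be information lossless, \theref{theor:lossless_patterns} forces each of the $m$ rows of $\LL$ to be a pattern equivalent (under reflection and inversion) to either $\boldsymbol{p}_{\mathrm{I}} = [\mathbf{0}_{M/2}, \mathbf{1}_{M/2}]$ or to $\boldsymbol{p}_{\mathrm{II}} = [\mathbf{0}_{M/4}, \mathbf{1}_{M/2}, \mathbf{0}_{M/4}]$, and the presence of at least one $\boldsymbol{p}_{\mathrm{II}}$-type row additionally forces $\setS$ to be symmetric. I would first enumerate both equivalence classes: that of $\boldsymbol{p}_{\mathrm{I}}$ consists of only $\boldsymbol{p}_{\mathrm{I}}$ and $\mathrm{inv}(\boldsymbol{p}_{\mathrm{I}})$ because $\mathrm{refl}(\boldsymbol{p}_{\mathrm{I}}) = \mathrm{inv}(\boldsymbol{p}_{\mathrm{I}})$, while that of $\boldsymbol{p}_{\mathrm{II}}$ consists of only $\boldsymbol{p}_{\mathrm{II}}$ and $\mathrm{inv}(\boldsymbol{p}_{\mathrm{II}})$, both reflection-symmetric. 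Thus at most four distinct patterns are admissible as rows of $\LL$.

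The key step is then a counting argument on the columns of $\LL$. Each of these four admissible patterns partitions $[M]$ into just two blocks of constant value, so any $\LL$ whose rows lie in this collection classifies the index set $[M]$ according to the common refinement of only two binary partitions, and therefore induces at most $2 \times 2 = 4$ distinct columns. Since a valid labeling must have all $M$ columns of $\LL$ distinct (the map $\{0,1\}^m \to \setS$ is one-to-one), this immediately forces $M \leq 4$.

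For $M = 4$ the four distinct columns can be produced only if each equivalence class contributes exactly one row: picking two representatives from the same class yields two identical pairs of columns and at most two distinct columns overall. Up to the trivial operations that define labeling equivalence, the unique remaining choice is precisely the \gls{BRGC} $\LL_1$, and the $\boldsymbol{p}_{\mathrm{II}}$-row still requires $\setS$ to be symmetric, giving the claimed statement.

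The main obstacle is not technical but rather a matter of scope: the degenerate case $M = 2$ (where $m = 1$ and the only pattern $[0,1]$ coincides with $\boldsymbol{p}_{\mathrm{I}}$) is trivially lossless for any 2-PAM constellation and falls outside the combinatorial dichotomy above. I would handle this by explicitly noting that the statement is to be read under the implicit convention $m \geq 2$, or equivalently that 2-PAM is treated as a degenerate case. Beyond this, the only care needed is in the partition-refinement bound on the number of distinct columns, which follows directly from the fact that each admissible pattern is two-valued with only two blocks.
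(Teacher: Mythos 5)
Your proposal is correct and follows essentially the same route as the paper: invoke \theref{theor:lossless_patterns} to restrict every row of $\LL$ to the two lossless pattern equivalence classes, and then argue combinatorially that injectivity of the labeling forces $M \leq 4$ with exactly one row from each class, which is the BRGC on a symmetric constellation. Your partition-refinement count of distinct columns merely makes explicit the step the paper dismisses as ``easy to show'' (that neither class can appear twice), and your remark on the degenerate $2$-PAM case is a fair scoping observation that the paper's own proof also leaves implicit.
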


\begin{proof}
	It is easy to show that $\boldsymbol{p}_{\mathrm{I}}$ and
	$\boldsymbol{p}_{\mathrm{II}}$ or their equivalent patterns cannot be used twice in a labeling. This means that
	any labeling with more than two bit positions will contain a bit
	pattern for which the max-log approximation causes an information loss.
\end{proof}
It is interesting to look at the
function~\eqref{eq:optimal_correction} and compare it with the curve
obtained by plotting $\lex$ versus $\lml$, as shown
in~\figref{fig:maxlog_vs_exact} for the three non-equivalent patterns
$\boldsymbol{p}_{\mathrm{I}}$ (red),	$\boldsymbol{p}_{\mathrm{II}}$
(green), and $\boldsymbol{p}_{\mathrm{III}}$ (blue). In general, for
the lossless patterns, this function coincides with the curve $\lex$
versus $\lml$ and for lossy patterns it does not. The information loss
then comes from the region where $g(\cdot)$ cannot recover the exact
L-value. 

\begin{figure}
	\centering
	\subfloat[$\SNR = 0$~dB.]{
	\includegraphics{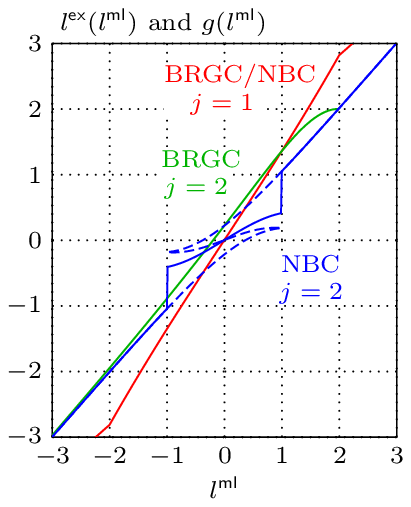}
	}
	\subfloat[$\SNR = 6$~dB.]{
	\includegraphics{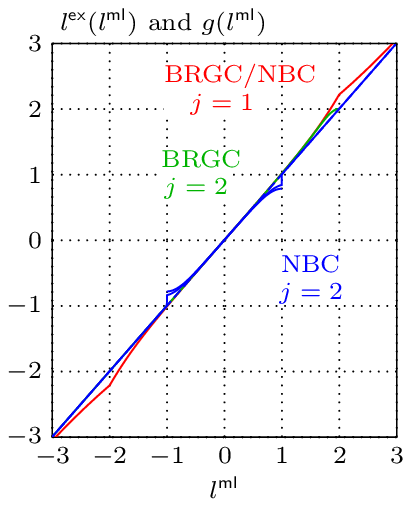}
	}

	\caption{The correction function $g(\cdot)$ (solid) and the exact
	L-value versus the max-log L-value (dashed) for the three
	non-equivalent patterns for an equally spaced 4-PAM constellation. The values on the x- and y-axes are normalized by $4\SNR/5$.}
	\label{fig:maxlog_vs_exact}
\end{figure}

\section{BICM Analysis} 
\label{sec:labelings}

In this section, we return from the bit-level viewpoint to the
original $m$-input $m$-output channel shown in
Fig.~\ref{fig:block_diag}. 

\subsection{BICM Mutual Information}

The \gls{BICM} \gls{MI} is defined as 
\begin{align}
	\label{eq:bicm_capacity}
	\BICMMI_{\bL} = \sum_{j=1}^{m} \MI_{L_j} = \sum_{j=1}^{m}
	I(B_j; L_j)\EqCo
\end{align}
i.e., the sum of $m$ bit-level \glspl{MI}, for both exact
(cf.~\cite[eq. (15)]{Caire98}) and max-log L-values.  Under the
parallel independent channel model assumption~\cite{Caire98}, it is
the maximum achievable rate for exact L-values with the standard
\gls{BICM} decoder~\eqref{eq:bicm_decoder}.  However, in the case of
the model in~\figref{fig:block_diag}, its operational meaning as an
upper bound on the achievable rate is unclear. Using the mismatched
decoding framework, it was shown to be an achievable rate for the
standard BICM decoder \cite[Sec.~III]{Martinez09}. It is also
achievable for max-log L-values, provided that the ideal correction
function $g(\cdot)$ is applied for each bit level before decoding via
\eqref{eq:bicm_decoder}.

\subsection{BICM Generalized Mutual Information}

The BICM GMI for \gls{iud} input bits can be written as \cite[eq.~(62)]{Martinez09}
\begin{equation}
	\label{eq:bicm_gmi}
	\BICMGMI_{\bL} = m - \inf_{s\geq0}{ \sum_{j = 1}^{m} \Expect{\log_2
	\left(1 + e^{-s \check{B}_j L_j}\right)} }
\end{equation}
and was shown to be an achievable rate for the decoder
in~\eqref{eq:bicm_decoder}~\cite[Sec.~III]{Martinez09}.  For exact
L-values, similarly to the bit-level GMI in~\eqref{eq:gmi_general}
(see also~\eqref{eq:mi_ex}), the value $s = 1$ maximizes the BICM GMI
in~\eqref{eq:bicm_gmi}. In that case, the BICM GMI can be written as a
sum of bit-level GMIs. This, however, does not hold for max-log
L-values. 

It has recently been shown in~\cite{Nguyen11} that the so-called
``harmonized'' \gls{GMI} defined as \cite[eqs.~(18),
(21)]{Nguyen11}
\begin{multline}
\label{eq:bicm_s_gmi}
	\HBICMGMI_{\bL} = \sum_{j = 1}^{m} \GMI_{L_j} \\= m - \sum_{j =
	1}^{m} \inf_{s_j \geq 0} \Expect{\log_2 \left( 1 + e^{-s_j\check{B}_j L_j} \right)}
\end{multline}
is achievable when max-log L-values are used and different linear
corrections are applied to L-values at different bit levels. Note
that, unlike the BICM GMI, the harmonized GMI in~\eqref{eq:bicm_s_gmi}
corresponds to the sum of $m$ bit-level GMIs for both exact \emph{and}
max-log L-values. We show in the following theorem that the
harmonized GMI is achievable by the standard BICM decoder without the
assumption of any L-value correction. 
\begin{theorem}
	\label{theorem:harmonized_gmi}
	For any one-dimensional constellation and any labeling, the rate $\HBICMGMI_{\bL}$ in~\eqref{eq:bicm_s_gmi} is achievable by the standard BICM
	decoder~\eqref{eq:bicm_decoder}.  
\end{theorem}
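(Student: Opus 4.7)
The plan is to prove achievability by exhibiting a family of codes for which the standard BICM decoder \eqref{eq:bicm_decoder} naturally decomposes into $m$ parallel bit-level decoders of the form \eqref{eq:decoder_general}. The key observation is that the BICM correlation metric $\sum_{i,j} b_{j,i} l_{j,i}$ is additive across bit positions, so any positive per-position rescaling $s_j$ becomes irrelevant whenever the code itself has a product structure across bit positions; no L-value correction at the demapper is then needed, explaining why the scaling factors used in \cite{Nguyen11} are in fact superfluous.

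Concretely, for each $j \in [m]$ and any $R_j < \GMI_{L_j}$, the operational meaning of the bit-level GMI established after \eqref{eq:gmi_general} guarantees the existence of an \gls{iud} binary code $\setC_j \subset \BinSet^N$ of rate at least $R_j$ whose error probability under the bit-level decoder \eqref{eq:decoder_general} applied to the channel $f_{L_j|B_j}(\cdot|\cdot)$ vanishes as $N \to \infty$. I would then let $\setC = \setC_1 \times \cdots \times \setC_m$, interpreted using the codeword ordering fixed in Section~II-E, so that $\boldsymbol{b} \in \setC$ if and only if $\boldsymbol{b}_j := (b_{j,1}, \ldots, b_{j,N}) \in \setC_j$ for every $j$. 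Since
\begin{equation*}
\sum_{i=1}^{N} \sum_{j=1}^{m} b_{j,i} l_{j,i} = \sum_{j=1}^{m} \sum_{i=1}^{N} b_{j,i} l_{j,i},
\end{equation*}
and the feasible set is a Cartesian product, the joint maximization in \eqref{eq:bicm_decoder} separates into $m$ independent bit-level maximizations, each of which coincides exactly with \eqref{eq:decoder_general} applied to $(\setC_j, \boldsymbol{l}_j)$.

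By the union bound, the overall codeword error probability is upperbounded by the sum of the $m$ bit-level error probabilities, each of which vanishes in $N$ by construction. The resulting rate per AWGN channel use is $\sum_{j=1}^m R_j$, which can be chosen arbitrarily close to $\sum_{j=1}^m \GMI_{L_j} = \HBICMGMI_{\bL}$, establishing the claimed achievability. There is no substantive technical obstacle; the only subtlety is to ensure that the product structure of $\setC$ is aligned with the bit-position blocking in the codeword ordering of Section~II-E, and to observe that, for such product codes, the standard BICM decoder is insensitive to any positive per-block rescaling---which is precisely why the harmonized rate is attained without modifying the decoder.
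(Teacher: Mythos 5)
Your overall architecture is the same as the paper's: build $\setC$ as the Cartesian product $\setC_1 \times \dots \times \setC_m$ aligned with the bit-position blocking, observe that the additive correlation metric makes the decoder \eqref{eq:bicm_decoder} separate into $m$ copies of \eqref{eq:decoder_general}, and union-bound the error probability over the $m$ bit-level error events. The remark that per-position scaling is irrelevant once the code has product structure is also consistent with the paper's message.

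However, there is a genuine gap in the step where you invoke the operational meaning of the bit-level GMI to obtain, \emph{separately for each $j$}, a fixed code $\setC_j$ that is reliable over the channel $f_{L_j|B_j}(\cdot|\cdot)$, and only then form the product. The channel $f_{L_j|B_j}$ is defined (Section~III) under the assumption that the \emph{other} bit positions carry i.u.d.\ bits. Once the constituent codes $\setC_{j'}$, $j' \neq j$, are fixed, the interference at position $j$ in channel use $i$ is the bit $b_{j',i}(\m_{j'})$ of a uniformly drawn codeword of a \emph{fixed} code, not an i.u.d.\ sequence; the resulting conditional law of $\boldsymbol{L}_j$ given $\boldsymbol{B}_j$ is a mixture over codewords of product distributions, which is not the memoryless channel $\prod_i f_{L_j|B_j}(l_{j,i}|b_{j,i})$ for which the reliability of $\setC_j$ was guaranteed. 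A code designed for the i.u.d.-averaged channel need not perform well on this induced channel, so the union bound is summing error probabilities that you have not actually controlled. The paper closes exactly this hole by \emph{not} treating the bit-level achievability as a black box: it carries out Gallager's random-coding analysis jointly over the ensemble of all $m$ constituent codebooks (Appendix~C), so that in the ensemble average the interference bits really are i.u.d., the per-level exponents $E_j(\gamma,s_j)$ single-letterize to the bit-level GMIs, and a single good product code is extracted only at the very end. Your proof needs either this joint ensemble argument or some additional (nontrivial) continuity/typicality argument showing that a good i.u.d.\ code induces interference statistically indistinguishable from i.u.d.\ bits for the purposes of the mismatched decoder.
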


\begin{proof}
	The proof is given in~Appendix~\ref{Appendix.theor:harmonized_gmi}.
\end{proof}

\begin{remark}
	The BICM GMI \eqref{eq:bicm_gmi} is the largest rate for which
	the average error probability, averaged over all messages and
	\gls{iud}~codes, vanishes. On the other hand, the proof of Theorem
	\ref{theorem:harmonized_gmi} relies on codes that are constructed as
	the Cartesian product of $m$ \gls{iud} codes and, hence, the
	overall code is not \gls{iud} The fact that rates larger than those
	given by the \gls{GMI} can be achieved with non-\gls{iud} codes and
	mismatched decoders has also been observed in~\cite{Lapidoth96}.
\end{remark}

\subsection{Inequalities}

To clarify the difference between the achievable rates for exact and
max-log L-values discussed in this section, we give a short summary in
the form of the following inequalities. For exact L-values, all
studied quantities are the same and we have
\begin{align}
\label{ineq.labeling.exact}
& \BICMMI_{\bLex} = \HBICMGMI_{\bLex} = \BICMGMI_{\bLex}\EqCo
\end{align}
which is the rate that is achievable by the standard BICM decoder.

The value in~\eqref{ineq.labeling.exact} is an upper bound on the
corresponding quantities for max-log L-values, i.e.,
\begin{align}
\label{ineq.labeling.maxlog}
\BICMMI_{\bLex} \geq \BICMMI_{\bLml} \geq \HBICMGMI_{\bLml} \geq \BICMGMI_{\bLml}.
\end{align}
As previously mentioned, the second quantity is an achievable rate if
the function~\eqref{eq:optimal_correction} is applied to the max-log
L-values from the $m$ different bit positions before passing them to
the decoder~\eqref{eq:bicm_decoder}. The third quantity is a rate
achievable by the standard BICM decoder~\eqref{eq:bicm_decoder}
without any L-value correction. The last quantity corresponds to the
BICM GMI as defined in \cite[eq.~(59)]{Martinez09} for max-log
L-values. As shown in~Corollary~\ref{theorem:lossless_labelings},
for one-dimensional constellations the first inequality is an equality
only for a symmetric $4$-PAM constellation labeled with a binary
labeling equivalent to the BRGC.

\section{L-value Processing}

\subsection{Symmetrization} 

In this subsection, we study how bit-level channel symmetrization
affects the GMI and the MI for exact and max-log L-values. Bit-level
symmetrization can be motivated as follows. A binary input channel
$f_{L|B}(\cdot|\cdot)$ is said to be output-symmetric if 
\begin{align}
	f_{L|B}(l|b) = f_{L|B}(-l|\bar{b}).
\end{align}
For some patterns, for instance for $\boldsymbol{p} = [0, 1, 1, 0]$,
the channel is not output-symmetric (neither for exact nor max-log
L-values), which complicates the analysis of BICM systems in certain
cases. For example, one cannot assume the transmission of the all-zero
codeword when studying the error probability $\Pe$ of a linear code
over such a channel. 

\begin{figure}[t]
	\centering
	\begin{center}
		\includegraphics{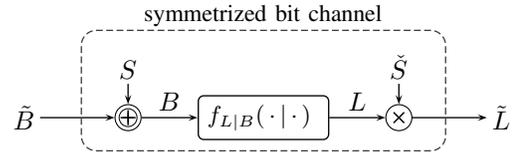}
	\end{center}
	\caption{Illustration of the channel symmetrization technique via
	i.u.d. bit adapters. $S$ is assumed to be known by both the
	transmitter and receiver and added to $B$ modulo 2 at the
	transmitter. }
	\label{fig:symmetrization}
\end{figure}
\begin{figure}[t]
	\centering
	\begin{center}
		\includegraphics{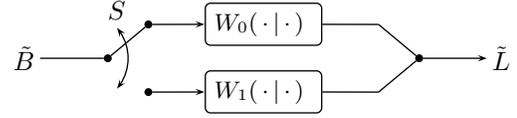}
	\end{center}
	\caption{Equivalent model for the channel symmetrization where $S$
	is interpreted as a switch and we have $W_0(l|b) = f_{L|B}(l|b)$ (for
	$S = 0$) and
	$W_1(l|b) = f_{L|B}(-l|\bar{b})$ (for $S = 1$). }
	\label{fig:symmetrization_state}
\end{figure}

To enforce an output symmetric channel, it was proposed
in~\cite{Caire98} to use a randomly complemented labeling.
In~\cite{Hou2003}, a similar symmetrization technique was realized by
the use of random \gls{iid} bit adapters as shown in
Fig.~\ref{fig:symmetrization}. At the transmitter, a uniformly random
bit $S$ is added modulo 2 to the transmitted bit and at the receiver,
the L-value is multiplied by $\check{S} = (-1)^S$. The system can be thought of in the following way.
The value of the adapter $S$ is known at both the transmitter and the receiver side, however, it is
not known to the encoder and decoder.\footnote{The considered system model
is equivalent to the one in~\secref{sec:patterns} if $S$ is known to
the decoder.} Hence, the adapter can be
considered part of the channel. If we denote the symmetrized L-value by $\tilde{L}$, the
conditional \gls{PDF} of $\tilde{L}$ can be related to the conditional
\gls{PDF} of the original L-value through
\begin{equation}
	f_{\tilde{L}|\tilde{B}}(l|b) = \frac{1}{2}\left(f_{L|B}(l|b) +
f_{L|B}(-l|\bar{b})\right).
\label{eq:symmetric_pdf}
\end{equation}
In~\cite[Th.~2]{Hou2003}, it was shown that the MI is unchanged by
the symmetrization if exact L-values are used. Somewhat surprisingly,
the effect of this operation on the MI and the GMI for max-log
L-values has not been studied in the literature. In the following, we show that the
\gls{GMI} is not affected by the symmetrization, while the \gls{MI} is
reduced for max-log L-values.

\begin{theorem} \label{theo:gmi_symmetrization}
	For any one-dimensional constellation and any pattern, the bit-level channel symmetrization does not change the \gls{GMI}
	in~\eqref{eq:gmi_general}, i.e., $\GMI_{\tilde{L}} = \GMI_{L}$.
\end{theorem}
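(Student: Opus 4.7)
The plan is to prove $\GMI_{\tilde L} = \GMI_{L}$ by establishing the stronger pathwise identity $\check{\tilde B}\,\tilde L = \check B\, L$ on a single probability space. Once this is in hand, the expectation $\Expect{\log_2(1+e^{-s\check{\tilde B}\tilde L})}$ coincides with $\Expect{\log_2(1+e^{-s\check B L})}$ for every $s\ge 0$, the two infima over $s\ge 0$ agree, and the theorem follows directly from the definition~\eqref{eq:gmi_general}. The key conceptual point is that the GMI depends on the joint distribution of $(B,L)$ only through the scalar $\check B L$, so any transformation that preserves this scalar preserves $\GMI_{L}$.

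To set this up, I would make the model of Fig.~\ref{fig:symmetrization} explicit: introduce a bit $S \in \BinSet$ that is uniform and independent of $\tilde B$, let $B = \tilde B \oplus S$ be the bit that enters the underlying bit-level channel, let $L$ be the resulting L-value, and set $\tilde L = (-1)^S L$. A direct verification based on~\eqref{eq:symmetric_pdf} confirms that this construction produces the symmetrized conditional PDF $f_{\tilde L|\tilde B}(\cdot|\cdot)$, since conditioning on $S=0$ leaves the channel unchanged and conditioning on $S=1$ flips both the input and the sign of the output.

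The core calculation is then a two-line sign manipulation. Since $\tilde B + S$ and $\tilde B \oplus S$ differ by the even integer $2\tilde B S$, one has $(-1)^{\tilde B + S} = (-1)^{\tilde B \oplus S}$, and therefore
\begin{equation*}
\check{\tilde B}\,\tilde L \,=\, (-1)^{\tilde B + 1}\,(-1)^{S}\,L \,=\, (-1)^{(\tilde B \oplus S)+1}\,L \,=\, \check B\, L.
\end{equation*}
The two sign flips introduced at the transmitter (via $S$ added to $\tilde B$) and at the receiver (via multiplication by $(-1)^S$) cancel exactly inside the product $\check{\tilde B}\tilde L$, which is the only functional of the data that enters the GMI integrand.

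There is no genuine obstacle here: the proof reduces to a bookkeeping argument about signs, once combined with the observation that the GMI is a functional of the distribution of $\check B L$ rather than of the full joint $(B,L)$. The subtlety worth flagging is keeping the conventions straight, $\check b = (-1)^{b+1}$ for the coded bits versus the receiver multiplier $(-1)^S$ of Fig.~\ref{fig:symmetrization}, but the cancellation above is immune to that choice. I also note that the same argument will \emph{not} carry over to the MI in the next subsection, because the MI depends on the full conditional law $f_{L|B}$ and not merely on the distribution of $\check B L$, which is consistent with the later result that the MI can strictly decrease under symmetrization.
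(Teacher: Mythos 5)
Your proof is correct, but it takes a genuinely different route from the paper's. The paper argues directly on densities: it expands $\Expect{\log_2(1+e^{-s\check{\tilde{B}}\tilde{L}})}$ as an integral against the symmetrized conditional PDF~\eqref{eq:symmetric_pdf}, and the substitution $l\mapsto -l$ together with the relabeling $b\mapsto\bar{b}$ collapses the two terms of~\eqref{eq:symmetric_pdf} back into the original expectation. You instead realize the symmetrized channel explicitly as the coupling $(S,\tilde{B},\,B=\tilde{B}\oplus S,\,L,\,\tilde{L}=(-1)^S L)$ and observe the pathwise identity $\check{\tilde{B}}\tilde{L}=\check{B}L$, so the two scalars have the same distribution and every term in~\eqref{eq:gmi_general} is unchanged for every $s\geq 0$. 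The two arguments encode the same cancellation---your sign identity is the paper's change of variables in disguise---but yours is slightly stronger and more transparent: it shows that the entire law of $\check{B}L$ is invariant under symmetrization, not merely the particular expectation appearing in the GMI, and it makes immediate the paper's post-theorem remark that the decoder~\eqref{eq:decoder_general} only ever sees $\check{B}L$ and hence cannot be hurt by discarding the asymmetry. The one point requiring care, which you correctly flag and resolve, is the receiver multiplier convention: it must equal $+1$ for $S=0$ (i.e., $(-1)^S$ as in Fig.~\ref{fig:symmetrization}, rather than $\check{S}=(-1)^{S+1}$ as the paper's general notation for $\check{b}$ would suggest); with that fixed, the verification that your coupling reproduces~\eqref{eq:symmetric_pdf} and the cancellation $(-1)^{\tilde{B}+S}=(-1)^{\tilde{B}\oplus S}$ are both routine.
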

\begin{proof}
	Let $h(x) =	\log_2(1+e^{-x})$. The expectation in~\eqref{eq:gmi_general} with respect to
	$\tilde{L}$ can then be written as
	\begin{align}
		&\Expect{h(s\check{\tilde{B}}\tilde{L})} \\ & = \frac{1}{2}\sum_{b \in
		\BinSet}\int_{-\infty}^{\infty}f_{\tilde{L}|\tilde{B}}(l|b)
		h(s\check{b}l)\, \mathrm{d}l\\ 
		& \stackrel{\eqref{eq:symmetric_pdf}}{=} \frac{1}{4}\sum_{b \in
		\BinSet}\int_{-\infty}^{\infty}\left(f_{L|B}(l|b) +
		f_{L|B}(-l|\bar{b})\right) h(s\check{b}l)\, \mathrm{d}l\\ 
		&=
		\frac{1}{2}\sum_{b \in
		\BinSet}\int_{-\infty}^{\infty}f_{L|B}(l|b) h(s\check{b}l)\,
		\mathrm{d}l\\ & = \Expect{h(s\check{B} L)}.
	\end{align}
\end{proof}

Intuitively, this result can be explained by the fact that the
decoder~\eqref{eq:decoder_general} does not exploit the information
about the asymmetry of the L-values even if it is available. 

The effect of the channel symmetrization on the mutual information
is described in the following corollary.

\begin{corollary}\label{corol:symmetr}
	For any one-dimensional constellation and any pattern, the bit-level channel symmetrization does not change the \gls{MI}, i.e.,
	$\MI_{\Lexmix} = \MI_{\Lex}$. Furthermore, we have $\MI_{\Lmlmix}
	\leq \MI_{\Lml}$ with equality if and only if the correction
	function in \eqref{eq:optimal_correction} is odd, i.e., $g(l) = -
	g(-l)$. 
\end{corollary}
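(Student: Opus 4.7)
The plan is to view the adapter $S$ as side information that is absent at the decoder, and to measure exactly how much is lost by hiding it. First I would observe that $S$ is independent of $B$ and uniform, and that conditional on $S$ the map $L \leftrightarrow \tilde L$ given by $L = \check{S} \tilde L$ is a bijection. A two-step expansion then yields
\begin{equation*}
I(B;\tilde L,S) = I(B;L,S) = I(B;S) + I(B;L\mid S) = I(B;L),
\end{equation*}
where $I(B;L\mid S)=I(B;L)$ because conditioning on $S=1$ amounts to relabelling the channel input $B \mapsto \bar B$, which leaves the mutual information invariant. Applying the chain rule on the other side gives $I(B;\tilde L,S) = I(B;\tilde L) + I(B;S\mid \tilde L)$, so that
\begin{equation*}
\MI_{\tilde L} \;=\; \MI_L - I(B;S\mid \tilde L) \;\leq\; \MI_L,
\end{equation*}
with equality if and only if $B \perp S \mid \tilde L$.

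Next, I would translate the conditional independence $B \perp S \mid \tilde L$ into a condition on the densities. Writing out the four joint probabilities $\Pr(B=b,S=s\mid \tilde L=l)$ via Bayes' rule using the facts that $\tilde L = \check{S} L$, $B$ and $S$ are independent and uniform, and $L$ given $\tilde B = B \oplus S$ has density $f_{L|B}$, and then equating the joint with the product of the marginals, after cancellation the condition reduces to
\begin{equation*}
f_{L|B}(l|0)\,f_{L|B}(-l|0) \;=\; f_{L|B}(l|1)\,f_{L|B}(-l|1)
\end{equation*}
for almost every $l \in \realR$. Taking logarithms and using the definition of $g$ in~\eqref{eq:optimal_correction}, this is precisely $g(-l) = -g(l)$, i.e., $g$ is odd.

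Finally, I would specialize to the two L-value types. For $L = \Lex$, the value $l$ is itself the log-likelihood ratio of the channel $f_{\Lex|B}$, so the induced correction function is $g(l) = l$, which is trivially odd; this gives $\MI_{\Lexmix} = \MI_{\Lex}$ and recovers~\cite[Th.~2]{Hou2003}. For $L = \Lml$, the correction $g$ need not be odd---as already visible from the curves associated with $\boldsymbol{p}_{\mathrm{II}}$ and $\boldsymbol{p}_{\mathrm{III}}$ in \figref{fig:maxlog_vs_exact}---so the inequality can be strict, and the characterization of the equality case is exactly the claimed oddness of $g$. The main obstacle is the density-level computation translating $B \perp S \mid \tilde L$ into the symmetric product identity above; once that algebra is carried out, the remaining steps are routine applications of the chain rule and nonnegativity of mutual information.
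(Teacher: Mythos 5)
Your argument is correct, and it reaches the equality characterization by a genuinely different route than the paper. The paper establishes the same basic inequality, $\MI_{L} = I(\tilde{B};\tilde{L}\,|\,S)\geq I(\tilde{B};\tilde{L}) = \MI_{\tilde{L}}$, but obtains the equality condition by invoking the general convexity result of Theorem~\ref{theorem:mutual_information_convexity}, proved via the log-sum inequality; that theorem is deliberately stated for $m$ component channels so that it can be reused for the channel-mixing result in Corollary~\ref{corol:mix}. You instead expand $I(B;\tilde{L},S)$ by the chain rule in two ways, identify the loss exactly as $I(B;S\,|\,\tilde{L})$, and characterize equality as the conditional independence $B\perp S\,|\,\tilde{L}$; for the $2\times2$ table of joint densities of $(B,S)$ given $\tilde{L}=l$ this is the vanishing-determinant (rank-one) condition $f_{L|B}(l|0)f_{L|B}(-l|0)=f_{L|B}(l|1)f_{L|B}(-l|1)$, which is indeed equivalent to $g(-l)=-g(l)$. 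I verified this computation; it agrees with the paper's condition $g_0(l)=g_1(l)$. Your route is self-contained and arguably more transparent here, and it additionally quantifies the loss as $I(B;S\,|\,\tilde{L})$, though it does not transfer as directly to the $m$-fold mixing case. Two points of care: (i) throughout your decompositions $B$ must denote the outer (coded) bit, i.e., the paper's $\tilde{B}$, and not the modulator input $\tilde{B}\oplus S$ --- the symmetrized-channel MI is $I(\tilde{B};\tilde{L})$, and the MI between the modulator input and $\tilde{L}$ is a different quantity; your remark about relabelling the channel input shows you intend this, but the notation should be made explicit. (ii) The ``after cancellation'' step is the crux and should be written out (it is a four-entry Bayes computation), with the usual almost-everywhere caveat on the common support of $f_{L|B}(\cdot|0)$ and $f_{L|B}(\cdot|1)$, which the paper handles through the domains $\mathcal{G}_j$ in Theorem~\ref{theorem:mutual_information_convexity}.
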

\begin{proof}
	The corollary follows from Theorem
	\ref{theorem:mutual_information_convexity} in Appendix
	\ref{Appendix.theor:mutual_information_convexity}. Indeed, the
	scrambler $S$ can be thought of as a switch between the two different
	channel laws $W_0(l | b) = f_{L|B}(l|b)$ for $S = 0$ and $W_1(l | b)
	= f_{L|B}(-l|\bar{b})$ for $S = 1$, see
	Fig.~\ref{fig:symmetrization_state}. Observe that we have $\MI_{L} =
	I(B; L) = I(\tilde{B}; \tilde{L}|S) \geq I(\tilde{B};\tilde{L}) =
	\MI_{\tilde{L}}$. The necessary and sufficient condition for
	equality according to Theorem
	\ref{theorem:mutual_information_convexity} is $g_0(l) = g_1(l)$,
	where $g_j(l)$ is defined in \eqref{eq:convexity_correction}.
	This condition can be written as $g_0(l) = -g_0(-l)$ since 
	\begin{align}
		g_1(l) &= \log \frac{W_1(l | 1)}{W_1(l | 0)}  
		\stackrel{(a)}{=} -  \log \frac{f_{L|B}(-l|1)}{f_{L|B}(-l | 0)} 
		\stackrel{(b)}{=} - g_0(-l) \EqCo
	\end{align}
	where $(a)$ follows from $W_1(l | b)
	= f_{L|B}(-l|\bar{b})$ and $(b)$ follows from $W_0(l | b) =
	f_{L|B}(l|b)$. 
	For exact L-values, we always have $g_0(l) = l$
	\cite[Th.~3.10]{Alvarado15_Book} which
	implies that $\MI_{\Lexmix} = \MI_{\Lex}$.
\end{proof}

\begin{remark}
As mentioned above, the result that $\MI_{\Lexmix} = \MI_{\Lex}$ was
already proved in~\cite[Th.~2]{Hou2003}. It also follows directly
from~\theref{theo:gmi_symmetrization} using the equivalence of the
\gls{GMI} and the \gls{MI} for exact L-values
in~\eqref{eq:mi_ex}.\footnote{As pointed out in~\cite[Th.~2]{Hou2003},
after symmetrization we lose an opportunity to optimize the input
distribution, and therefore, the symmetrization may decrease the
channel capacity. Channel capacity, however, is not studied in this
paper.}
\end{remark}

Observe that for exact L-values, we always have $\MI_{\Lexmix} =
\MI_{\Lex}$ regardless of whether the channel
$f_{\Lex|B}(\cdot|\cdot)$ is output-symmetric or not. Obviously, for
max-log L-values with an output-symmetric conditional PDF
$f_{\Lml|B}(\cdot|\cdot)$ we have $\MI_{\Lmlmix} = \MI_{\Lml}$, since
$g(l) = - g(-l)$. The numerical results presented
in~\secref{sec:num_res} suggest that if $f_{\Lml|B}(\cdot|\cdot)$ is
not output-symmetric, the correction function does not satisfy $g(l) =
- g(-l)$, and, hence, the inequality is strict. However, a proof for
this observation does not seem to be straightforward. 

The rate loss for symmetrized max-log L-values can be interpreted in
the following way. In order to achieve $\MI_{\Lml}$, the correction
function $g(\cdot)$ needs to be applied to the L-values prior to
decoding~\eqref{eq:decoder_general}. Hence, the information about the
asymmetry of the L-values is exploited by means of $g(\cdot)$. This
information is lost after the symmetrization unless $g(l) = -g(-l)$,
which causes the decrease of the mutual information.

From this analysis and \theref{theor:lossless_patterns}, we conclude
that the losses observed in~\cite[Fig.~4]{Alvarado07d}
and~\cite[Fig.~2]{Szczecinski07f} come partly from the L-value
symmetrization for constellations larger than $16$-ary \gls{QAM},
whereas the loss for $16$-\gls{QAM} is solely due to the
symmetrization and not due to the max-log approximation. This is in
contrast to the discussion included in
\cite{Alvarado07d,Szczecinski07f}, where the loss is attributed solely
to the max-log approximation. 

\subsection{Channel Mixing}

Channel mixing is another popular operation that is often assumed in
order to simplify the analysis of BICM systems~\cite{Xie12j}. Channel
mixing can be visualized in~\figref{fig:mixing}, where in addition to
the $m$-input $m$-output channel, an interleaver $\pi$ and a
deinterleaver $\pi^{-1}$ are introduced. The interleaver randomly and
uniformly assigns the input bits to the channel inputs and the
deinterleaver performs the reverse operation. Similarly to the
previous section, the random assignments of the bits are known to the
transmitter and the receiver. However, they are unknown to the
encoder and decoder.
For mixed channels, the $m$ L-values $\Lmix_1$, \dots, $\Lmix_m$ from
different bit positions have the same distribution and hence can be
treated equally, where the \gls{PDF} for all $j \in [m]$ is given
by
\begin{align}\label{eq:pdf_mixed}
	f_{\Lmix_j|\tilde{B}_j}(l|b) = \frac{1}{m}\sum_{j=1}^{m} f_{L_j|B_j}(l|b).
\end{align}

\begin{figure}[t]
	\centering
	\begin{center}
		\includegraphics{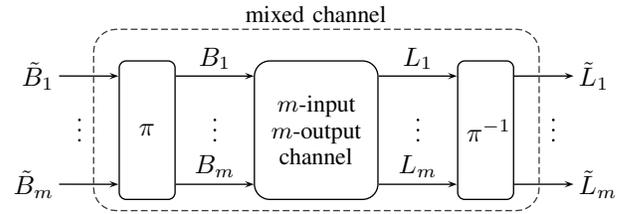}
	\end{center}
	\caption{Block diagram showing the mixing of bit positions. The interleaver $\pi$
	reorders the bit levels.}
	\label{fig:mixing}
\end{figure}

It is often said that channel mixing does not increase the BICM MI,
cf.~\cite[Th.~2]{Xie12j}, which is obvious from the data
processing inequality. In the following, we show that channel mixing
does not reduce the BICM GMI either. As in the case of the bit-level
symmetrization, the effect on the BICM MI depends on whether
exact or max-log L-values are used.

\begin{theorem}\label{theor:mixed}
For any one-dimensional constellation and any labeling, channel mixing does not affect the BICM GMI. In fact, we have
	\begin{align}
		\label{eq:gmis_mixed}
		\HBICMGMI_{\bLmix} = \BICMGMI_{\bLmix} = \BICMGMI_{\bL}.
	\end{align}
\end{theorem}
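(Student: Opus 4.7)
The plan is to exploit the fact that after mixing, all $m$ bit-level channels are governed by the same conditional PDF, namely the average in~\eqref{eq:pdf_mixed}. This single observation makes both equalities in~\eqref{eq:gmis_mixed} fall out by short manipulations of the definitions~\eqref{eq:bicm_gmi} and~\eqref{eq:bicm_s_gmi}; there is no deep obstacle, so the proof is mainly careful bookkeeping.

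Let $h(x) = \log_2(1+e^{-x})$. First I would compute $\Expect{h(s\check{\tilde{B}}_j \Lmix_j)}$ for a fixed bit position $j$ of the mixed channel. Using~\eqref{eq:pdf_mixed} and the fact that the mixed input bit $\tilde{B}_j$ is i.u.d., a one-line calculation gives
\begin{align}
\Expect{h(s\check{\tilde{B}}_j \Lmix_j)}
= \frac{1}{2}\sum_{b \in \BinSet}\int_{-\infty}^{\infty} \frac{1}{m}\sum_{k=1}^{m} f_{L_k|B_k}(l|b)\, h(s\check{b}l)\,\mathrm{d}l
= \frac{1}{m}\sum_{k=1}^{m}\Expect{h(s\check{B}_k L_k)},
\end{align}
which is independent of $j$. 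In particular, every bit-level GMI of the mixed channel coincides, $\GMI_{\Lmix_1} = \dots = \GMI_{\Lmix_m}$.

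For the equality $\BICMGMI_{\bLmix} = \BICMGMI_{\bL}$, I would substitute the identity above into~\eqref{eq:bicm_gmi}: the outer sum over $j$ produces exactly $m$ identical copies of $\frac{1}{m}\sum_k \Expect{h(s\check{B}_k L_k)}$, so the factor $1/m$ cancels and $\sum_{j}\Expect{h(s\check{\tilde{B}}_j \Lmix_j)} = \sum_{k}\Expect{h(s\check{B}_k L_k)}$. Taking $\inf_{s\geq 0}$ on both sides yields $\BICMGMI_{\bLmix} = \BICMGMI_{\bL}$.

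For the equality $\HBICMGMI_{\bLmix} = \BICMGMI_{\bL}$, the key observation is that in~\eqref{eq:bicm_s_gmi} applied to $\bLmix$, the $m$ separate infima are over the \emph{same} function of $s_j$ because all $m$ bit-level channels are identical after mixing. Hence a single common minimizer $s^\star$ attains all $m$ infima simultaneously, and I can pull the infimum outside the sum:
\begin{align}
\HBICMGMI_{\bLmix}
= m - \sum_{j=1}^{m}\inf_{s_j\geq 0}\Expect{h(s_j\check{\tilde{B}}_j \Lmix_j)}
= m - \inf_{s\geq 0}\sum_{j=1}^{m}\Expect{h(s\check{\tilde{B}}_j \Lmix_j)}
= \BICMGMI_{\bLmix}.
\end{align}
Combined with the previous step, this gives the full chain in~\eqref{eq:gmis_mixed}. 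The only subtlety worth noting is the legitimacy of interchanging $\sum_j$ and $\inf_{s_j}$, which is justified precisely because the summands are identical functions of their arguments after mixing; this is exactly the property that fails in general (and is the reason $\HBICMGMI \geq \BICMGMI$ can be strict for the unmixed channel).
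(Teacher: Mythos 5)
Your proof is correct and follows essentially the same route as the paper's: you use the identity $\Expect{h(s\check{\tilde{B}}_j \Lmix_j)} = \frac{1}{m}\sum_{k}\Expect{h(s\check{B}_k L_k)}$ from~\eqref{eq:pdf_mixed} to get $\BICMGMI_{\bLmix} = \BICMGMI_{\bL}$, and the fact that all mixed bit-level channels are identically distributed to collapse the $m$ separate infima in~\eqref{eq:bicm_s_gmi} into a single one, which is exactly the paper's step~\eqref{eq:mix_step}. The only difference is organizational (you compute the per-bit expectation once up front rather than running a single chain of equalities), and your closing remark correctly identifies why the sum--infimum interchange is legitimate here but not for the unmixed channel.
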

\begin{proof}
Using the definition of the harmonized GMI in~\eqref{eq:bicm_s_gmi}
and the fact that the mixed L-values all have the same distribution we
can write
	\begin{align}
	\HBICMGMI_{\bLmix} &= m - \sum_{j = 1}^{m} \inf_{s_j > 0} \Expect{\log 		\left( 1 + e^{-s_j\check{\tilde{B}}_j \Lmix_j} \right)}\nonumber\\
	&=m - \inf_{s_1 > 0} \sum_{j = 1}^{m}\Expect{\log 		\left( 1 + e^{-s_1\check{\tilde{B}}_j \Lmix_j} \right)}\label{eq:mix_step}
	\end{align}
	\begin{align}
	&=m -\inf_{s_1 > 0} \frac{m}{2} \sum_{b \in \mathcal{B}}\int_{-\infty}^{\infty}{ f_{\Lmix_1|\tilde{B}_1}(l|b) \log\left( 1 + e^{-s_1\check{b} l} \right) \,\,\mathrm{d}l}\nonumber\\
	&\stackrel{\eqref{eq:pdf_mixed}}{=}m	-\inf_{s_1 > 0} \frac{1}{2} \sum_{b \in \mathcal{B}}\int_{-\infty}^{\infty}{ \sum_{j=1}^{m} f_{L_j|B_j}(l|b) \log\left( 1 + e^{-s_1\check{b} l} \right) \,\,\mathrm{d}l}\nonumber\\
	 &=m - \inf_{s_1 > 0} \sum_{j=1}^{m}\frac{1}{2} \sum_{b \in \mathcal{B}}{\int_{-\infty}^{\infty} f_{L_j|B_j}(l|b) \log\left( 1 + e^{-s_1\check{b} l} \right) \,\,\mathrm{d}l}\nonumber\\
	 & = m - \inf_{s_1 > 0}\sum_{j = 1}^{m}  \Expect{\log 		\left( 1 +
	 e^{-s_1\check{B}_j L_j} \right)} \label{eq:last_step}\EqCo
	\end{align}
	where~\eqref{eq:mix_step} proves the first equality in~\eqref{eq:gmis_mixed} and~\eqref{eq:last_step} proves the second equality.
\end{proof}
Although channel mixing does not affect the BICM GMI, it reduces the
harmonized GMI for max-log L-values, i.e., $\HBICMGMI_{\bLmlmix} \leq
\HBICMGMI_{\bLml}$ with equality if and only if all bit-level
\glspl{GMI}~\eqref{eq:gmi_general} are minimized by the same value of
$s$. The intuitive explanation is that the bit-level information can
no longer be used by the encoder to construct $\setC$ as a product
code (see~Appendix~\ref{Appendix.theor:harmonized_gmi}).

Similarly to Corollary \ref{corol:symmetr}, the effect of channel
mixing on the mutual information is given as follows. 

\begin{corollary}\label{corol:mix}
	For any one-dimensional constellation and any labeling, we have
	$\BICMMI_{\bLexmix} = \BICMMI_{\bLex}$. Furthermore, we have $\BICMMI_{\bLmlmix} \leq
	\BICMMI_{\bLml}$ with equality if and only if the correction
	function in \eqref{eq:optimal_correction} is the same for all bit positions, i.e., $g_{j'}(l) = g_j(l)$ for all $j$, $j' \in [m]$. 
\end{corollary}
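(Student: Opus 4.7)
The plan is to mirror the strategy used in the proof of Corollary \ref{corol:symmetr}, this time invoking Theorem \ref{theorem:mutual_information_convexity} with the random interleaver $\pi$ in place of the scrambler $S$.

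First, I would exploit the fact that, by \eqref{eq:pdf_mixed}, the conditional PDF $f_{\Lmix_j|\tilde{B}_j}(l|b)$ is independent of $j$. Consequently the bit-level MIs $I(\tilde{B}_j;\Lmix_j)$ all coincide and
\begin{align}
\BICMMI_{\bLmix} = m \cdot I(\tilde{B}_1;\Lmix_1),
\end{align}
which reduces the problem to bounding the MI of a single bit-level channel that is a uniform mixture (with weights $1/m$) of the $m$ original bit-level channels $W_j(l|b) = f_{L_j|B_j}(l|b)$.

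Second, I would introduce a switch variable $J \in [m]$ identifying which of the $m$ original channels is selected by $\pi$ for the bit of interest; since $\pi$ is a uniformly random permutation, $J$ is uniform on $[m]$, and conditional on $J=j$ the channel law is exactly $W_j$. Theorem \ref{theorem:mutual_information_convexity} then gives
\begin{align}
I(\tilde{B}_1;\Lmix_1) \leq I(\tilde{B}_1;\Lmix_1 \mid J) = \frac{1}{m}\sum_{j=1}^{m} I(B_j;L_j),
\end{align}
so that multiplying by $m$ yields $\BICMMI_{\bLmix} \leq \BICMMI_{\bL}$. The same theorem supplies a necessary and sufficient condition for equality: the log-likelihood ratios of the switch-state channel laws must coincide. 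Here, those LLRs are precisely the correction functions $g_j$ defined in \eqref{eq:optimal_correction}, so equality is equivalent to $g_j(l) = g_{j'}(l)$ for all $j,j' \in [m]$. For exact L-values one has $g_j(l) = l$ for every $j$, so the condition is automatic and $\BICMMI_{\bLexmix} = \BICMMI_{\bLex}$.

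The main obstacle is confirming that the switch-channel viewpoint of Theorem \ref{theorem:mutual_information_convexity} applies cleanly in this setting: although $\pi$ induces statistical dependence across the outputs $\Lmix_1,\ldots,\Lmix_m$, only the marginal distributions enter $\BICMMI_{\bLmix} = \sum_j I(\tilde{B}_j;\Lmix_j)$, so the per-bit argument is unaffected by the coupling. One also has to check that the $g_j$ appearing in the equality criterion of the theorem are the same ones in \eqref{eq:optimal_correction}; this is immediate because the switch-state channels $W_j$ are by construction the original bit-level channels.
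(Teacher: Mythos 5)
Your proposal is correct and follows essentially the same route as the paper, which proves Corollary~\ref{corol:mix} by mirroring the proof of Corollary~\ref{corol:symmetr}: the random interleaver is viewed as a switch among the $m$ bit-level channel laws $W_j(l|b)=f_{L_j|B_j}(l|b)$, Theorem~\ref{theorem:mutual_information_convexity} yields $I(\tilde{B}_1;\Lmix_1)\leq I(\tilde{B}_1;\Lmix_1\mid J)$ with the stated equality condition on the $g_j$, and the exact-L-value case follows from $g_j(l)=l$. Your remarks that only the marginals enter $\BICMMI_{\bLmix}$ and that the switch-state channels coincide with the original bit-level channels are exactly the details needed to make the paper's one-line reference rigorous.
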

\begin{proof} 
	The proof is similar to the proof of Corollary \ref{corol:symmetr}
	and follows from Theorem
	\ref{theorem:mutual_information_convexity} in Appendix \ref{Appendix.theor:mutual_information_convexity}.
\end{proof}
The numerical results in the next section suggest that the
correction functions are different for non-equivalent patterns and,
hence, the inequality is strict for all one-dimensional constellations
with any labeling.  However, similarly to the channel symmetrization,
a proof for this observation does not seem to be straightforward. 

\section{Numerical Examples}\label{sec:num_res}
\newcommand{\loss}{\mathsf{L}}

So far, the information loss was characterized by inequalities. In
this section, however, we want to compare different information rates
quantitatively. To that end, we observe that all considered
information rates are strictly increasing functions of the \gls{SNR}. Hence, if
$R = \phi(\SNR)$, where $R$ is a generic information rate, then there
exists $\phi^{-1}(R) = \SNR$. Consider two rates $R_1$ and $R_2$ with
the corresponding functions $\phi_1$ and $\phi_2$ and assume that the
\gls{SNR} is expressed in dB. The loss of $R_2$ with respect to $R_1$
is defined as $\loss(R_1) = \phi^{-1}_2(R_1) - \phi^{-1}_1(R_1)$. The
loss $\loss$ can be graphically interpreted as the horizontal distance
between the curves $\phi_1(\SNR)$ and $\phi_2(\SNR)$ plotted over the
\gls{SNR} in dB for a particular value of the rate. We remark that
only rates that have the same range can be compared in terms of
$\loss$.

We first present numerical examples for the bit-level analysis to
illustrate the inequalities in~\eqref{ineq.pattern}. We consider the
equally spaced $4$-PAM constellation with the three nonequivalent
patterns defined in~\secref{sec:lossless_patterns}.
\figref{fig:bit_level_gmi} shows the loss in dB for different
achievable rates in~\eqref{ineq.pattern} with respect to (w.r.t.)
$\MI_{\Lex}$ as a function of the information rate in \gls{bpcu}. The
solid lines show the loss for the $\GMI_{\Lml}$. It can be seen that
the GMI for max-log L-values is always inferior to the MI for exact
L-values. An interesting behavior of the GMI for the pattern
$\boldsymbol{p}_{\mathrm{II}}$ at asymptotically low SNR is that the
loss grows unboundedly when the rate goes to zero (or equivalently,
when the SNR tends to zero).

The dashed line in~\figref{fig:bit_level_gmi} shows the loss for
$\MI_{\Lml}$. There is only one dashed line in the figure as two of
the three patterns are information lossless. Finally, the dash-dotted
line illustrates the effect of the symmetrization on the MI for the
pattern $\boldsymbol{p}_{\mathrm{II}}$ and confirms
Corollary~\ref{corol:symmetr}. We also observe that the
inequality in Corollary~\ref{corol:symmetr} in this case is strict.
We remark that, for $4$-PAM, $\boldsymbol{p}_{\mathrm{II}}$ is the
only pattern that gives such L-values (hence, only one dash-dotted
curve is shown in~\figref{fig:bit_level_gmi}).

\begin{figure}
	\centering \includegraphics[]{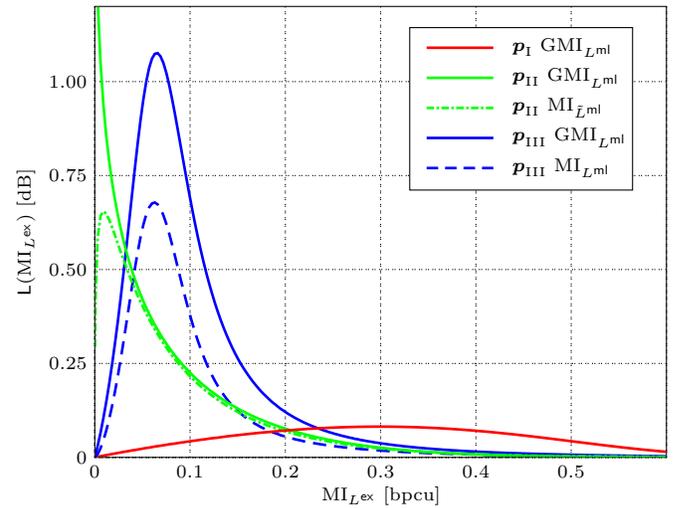}
	\caption{Different losses for the three patterns for an equally
	spaced $4$-PAM constellation. $\boldsymbol{p}_{\mathrm{I}}
	= [0, 0,
	1, 1]$, $\boldsymbol{p}_{\mathrm{II}} = [0, 1, 1, 0]$, and
	$\boldsymbol{p}_{{\mathrm{III}}} = [0, 1, 0, 1]$.}
	\label{fig:bit_level_gmi}
\end{figure}

As for the BICM analysis, we first consider an equally spaced $4$-PAM
constellation labeled with the NBC.  \figref{fig:BICM_inequalities}
shows the loss for different achievable rates w.r.t.~to the
$\BICMMI_{\bLex}$ as a function of the information rate. We note that
the third inequality in~\eqref{ineq.labeling.maxlog} becomes an
equality, i.e., $\HBICMGMI_{\bLml} = \BICMGMI_{\bLml}$, for $R \approx
0.137$.  The solid lines illustrate the inequalities
in~\eqref{ineq.labeling.maxlog}. The red, green, and 
blue curves are in the order of increasing loss. The dashed red line
shows the effect of channel mixing and is above the solid red line in
agreement with Corollary~\ref{corol:mix}. Moreover, this suggests that
the inequality in~Corollary~\ref{corol:mix} is strict.

In \figref{fig:BICM_8pam}, we show the loss associated with the use
of the max-log approximation for the $8$-PAM constellation labeled
with the BRGC (see Fig.~\ref{fig:4-pam_brgc}\subref{fig:subplot_8pam}). The solid and dashed lines correspond to the GMI and
the MI, respectively. Red, green, and blue lines show the loss for
the three patterns in the BRGC, whereas the magenta lines show the
loss for the entire labeling. Similarly to the GMI for $\boldsymbol{p}_{\mathrm{II}}$ in \figref{fig:bit_level_gmi}, the loss of the GMI for the patterns $\boldsymbol{p}_{2}$ and $\boldsymbol{p}_{3}$ goes to infinity as the rate goes to zero.

\begin{figure}
	\centering
	\includegraphics[]{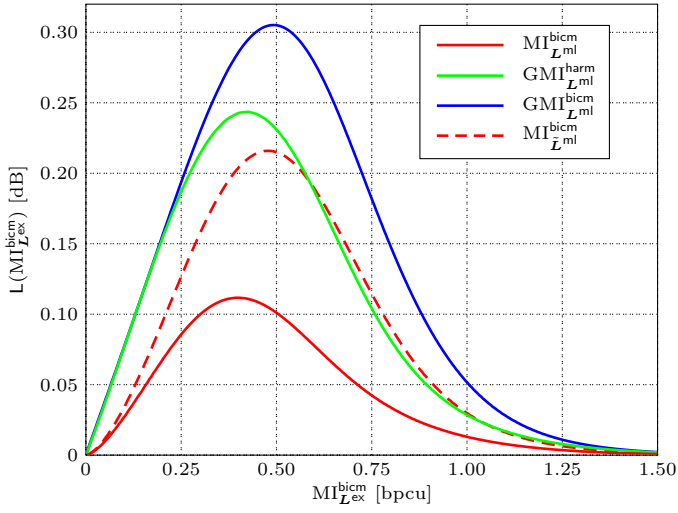}
	\caption{Different achievable rates for $4$-PAM with the NBC.}
	\label{fig:BICM_inequalities}
\end{figure}

\begin{figure}
	\centering
	\includegraphics[]{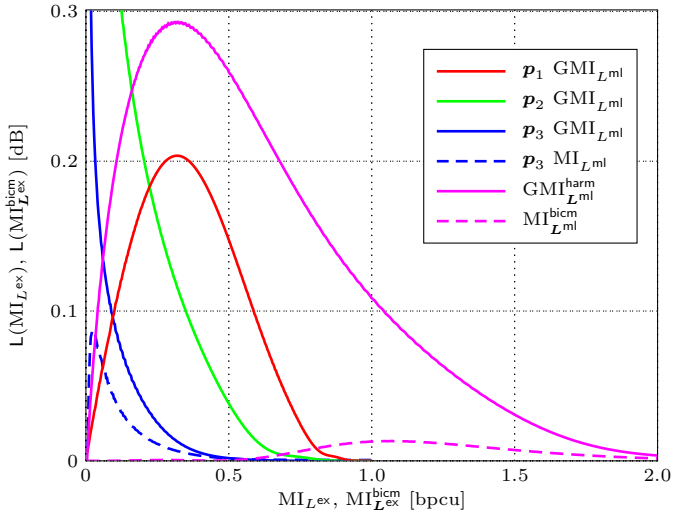}
	\caption{Different achievable rates for $8$-PAM with the BRGC.
	$\quad\boldsymbol{p}_1 = [0, 0, 0, 0, 1, 1, 1, 1]$, $\boldsymbol{p}_2 = [0, 0, 1, 1, 1, 1, 0, 0]$, and $\boldsymbol{p}_3 = [0, 1, 1, 0, 0, 1, 1, 0]$.}
	\label{fig:BICM_8pam}
\end{figure}

It can be seen from the figures that the losses can be quite large
for low information rates (i.e., low SNR). Furthermore, the losses can
even go to infinity as the rate goes to zero. However, for moderately
high rates, the losses are small. For instance, for 8-PAM with any
labeling and assuming a rate of $R = m-1$ (as advised
in~\cite{Unger82jan} to be used in coded modulation), the loss of
$\BICMMI_{\bLml}$ and $\HBICMGMI_{\bLml}$ w.r.t. $\BICMMI_{\bLex}$  does
not exceed $0.013$~dB and $0.024$~dB, respectively. Therefore, we
conclude that the differences between these achievable rates are
negligible from a practical viewpoint.

\section{Conclusions}
\label{sec:conclusion}

In this paper, we studied achievable rates of a BICM decoder
with both exact and max-log L-values for $M$-PAM
constellations. We showed that the max-log
approximation is not information lossy in some cases.
Furthermore, when exact L-values are considered, seemingly different
quantities, e.g., BICM MI, BICM GMI, or harmonized GMI are shown to give
the same achievable rate. This is not the case for max-log
L-values as these quantities are different. For high SNR, however, the differences
between them become negligible, which justifies the use
of the max-log approximation in practical systems and the considered
processing techniques for their analysis.

\appendices

\section{Proof of Lemma \ref{lemma:exact_symmetric}}
\label{Appendix.theor:exact_symmetric}

\newcommand{\R}{{\mathbb{R}}}
\newcommand{\W}{{\mathcal{W}}}
\newcommand{\mya}{d}
\newcommand{\myb}{t}

\newcommand{\myM}{{\tilde{M}}}

A symmetric constellation	and pattern are easily shown to be
sufficient for the exact L-value to be symmetric by using the
definition of $\lex(y)$ in \eqref{eq:Lvalue_exact}. To show that they
are also necessary we argue as follows.  Assume for a moment that the
symmetry point is the origin, i.e., $y_0 = 0$. We also define $v = e^{\rho}$, where $v>1$ since $\rho>0$. We denote the subconstellation of points
labeled with a $1$ by $\mya_1 < \dots < \mya_{\myM}$ and the
subconstellation of points labeled with a $0$ by $\myb_1 < \dots <
\myb_{\myM}$, where $\myM = M/2$. Then, using \eqref{eq:Lvalue_exact} and
$\lex(y) = \lex(-y)$, we find that, for all $y\in\R$,
\begin{equation}
	\label{eq:identity1}
\sum_i v^{2y\mya_i-\mya_i^2} \sum_j v^{-2y\myb_j-\myb_j^2}
= \sum_i v^{-2y\mya_i-\mya_i^2} \sum_j v^{2y\myb_j-\myb_j^2}.
\end{equation}
Substituting $\myM+1-i$ for $i$ in the first sum and $\myM+1-j$ for $j$ in the last sum yields
\begin{multline} \label{eq:identity2}
\sum_i v^{2y\mya_{\myM+1-i}-\mya_{\myM+1-i}^2} \sum_j v^{-2y\myb_j-\myb_j^2} \\
= \sum_i v^{-2y\mya_i-\mya_i^2} \sum_j v^{2y\myb_{\myM+1-j}-\myb_{\myM+1-j}^2}
\end{multline}
or, rearranging terms,
\begin{multline} \label{eq:identity3}
\sum_{i,j} v^{2y(\mya_{\myM+1-i}-\myb_j)-\mya_{\myM+1-i}^2-\myb_j^2} \\
= \sum_{i,j} v^{2y(\myb_{ {\myM}+1-j}-\mya_i)-\myb_{{\myM}+1-j}^2-\mya_i^2}.
\end{multline}

As $y \to \infty$, the largest exponents on both sides of~\eqref{eq:identity3} have to be the same for the equality to hold. The largest exponents correspond to $i=j=1$ and thus
\begin{equation} \label{eq:leading-term}
	v^{2y(\mya_{\myM}-\myb_1)-\mya_{\myM}^2-\myb_1^2} = v^{2y(\myb_{\myM}-\mya_1)-\myb_{\myM}^2-\mya_1^2}, \quad \forall y \in \R \EqCo
\end{equation}
which yields
\begin{align}
	\mya_{\myM}-\myb_1 &= \myb_{\myM}-\mya_1, \\
	\mya_{\myM}^2+\myb_1^2 &= \myb_{\myM}^2+\mya_1^2
\end{align}
or equivalently
\begin{align}
	\mya_{\myM}-\myb_{\myM} &= \myb_1-\mya_1, \label{eq:abdiff1} \\
	\mya_{\myM}^2-\myb_{\myM}^2 &= -\myb_1^2+\mya_1^2. \label{eq:a2b2diff1}
\end{align}
Factorizing \eqref{eq:a2b2diff1},
\begin{equation}
	(\mya_{\myM}+\myb_{\myM})(\mya_{\myM}-\myb_{\myM}) = -(\myb_1+\mya_1)(\myb_1-\mya_1).
\end{equation}
Dividing both sides by $\myb_1-\mya_1$, which by assumption of distinct constellation points in~\secref{sec:modulator} is nonzero, and using \eqref{eq:abdiff1} yields
\begin{equation} \label{eq:absum1}
	\mya_{\myM}+\myb_{\myM} = -\myb_1-\mya_1.
\end{equation}
Combining \eqref{eq:abdiff1} with \eqref{eq:absum1} yields
\begin{equation}\label{eq:absym}
	\mya_1 = -\mya_{\myM} \textrm{ and } \myb_1 = -\myb_{\myM}.
\end{equation}

We will now prove by contradiction that $\mya_i = -\mya_{{\myM}+1-i}, \forall i$. To this end, suppose the opposite, i.e., that there exists an integer $k\ge 2$ such that
\begin{equation}\label{eq:asym}
	\mya_i = -\mya_{{\myM}+1-i}, i=1,\ldots,k-1 \textrm{ and } \mya_k \ne
	-\mya_{{\myM}+1-k}.
\end{equation}
Regarding the relation between  $\myb_j$ and $-\myb_{{\myM}+1-j}$, we have already proven in \eqref{eq:absym} that they are the same for $j=1$. For $j\ge2$, we will distinguish between two cases, one of which must be true. However, both will be shown to lead to contradictions in combination with \eqref{eq:asym}, which can only mean that \eqref{eq:asym} is false.

\emph{Case 1:}
Suppose that
\begin{equation} \label{eq:bsym1}
	\myb_j = -\myb_{{\myM}+1-j}, \forall j.
\end{equation}
Then \eqref{eq:identity2} and \eqref{eq:bsym1} together yield
\begin{equation}
	\sum_{i=1}^{\myM} v^{2y\mya_{{\myM}+1-i}-\mya_{{\myM}+1-i}^2} =
	\sum_{i=1}^{\myM} v^{-2y\mya_i-\mya_i^2}, \quad \forall y \in \R.
\end{equation}
Cancelling terms using \eqref{eq:asym},
\begin{equation} \label{eq:identity4}
	\sum_{i=k}^{{\myM}+1-k} v^{2y\mya_{{\myM}+1-i}-\mya_{{\myM}+1-i}^2} =
	\sum_{i=k}^{{\myM}+1-k} v^{-2y\mya_i-\mya_i^2}, \quad \forall y \in \R.
\end{equation}
Again considering $y\to \infty$, the largest exponents occur for $i=k$ on
both sides. However, since $\mya_k \ne \mya_{{\myM}+1-k}$ by \eqref{eq:asym}, these exponents are unequal, which contradicts \eqref{eq:identity4}. It can be concluded that Case 1, defined by the assumption \eqref{eq:bsym1}, cannot be true.

\emph{Case 2:}
Suppose that there exists an integer $\ell \ge 2$ such that
\begin{equation}\label{eq:bsym2}
	\myb_j = -\myb_{{\myM}+1-j}, j=1,\ldots,\ell-1 \textrm{ and } \myb_\ell \ne
	-\myb_{{\myM}+1-\ell}.
\end{equation}
By \eqref{eq:asym} and \eqref{eq:bsym2}, the terms in
\eqref{eq:identity3} for which $i<k, \,j<\ell$ and $i>\myM +1 - k, \,j> \myM +1 - \ell$ cancel each
other. Defining $\W \triangleq \{(i,j): k\le i \le \myM + 1 - k \text{ or } \ell \le j \le \myM + 1 - \ell \}$, \eqref{eq:identity3} simplifies into
\begin{multline} \label{eq:identity5}
	\sum_{(i,j) \in\W} v^{2y(\mya_{{\myM}+1-i}-\myb_j)-\mya_{{\myM}+1-i}^2-\myb_j^2} \\
	= \sum_{(i,j) \in\W} v^{2y(\myb_{{\myM}+1-j}-\mya_i)-\myb_{{\myM}+1-j}^2-\mya_i^2}, \quad \forall y \in \R.
\end{multline}
When $y\to \infty$, the largest exponents in both sums in~\eqref{eq:identity5} correspond to the pairs $(i,j) \in \W$ for which
$\mya_{{\myM}+1-i}-\myb_j$ and $\myb_{{\myM}+1-j}-\mya_i$, respectively, are maximum. Since
$\mya_1 \le \mya_i \le \mya_{\myM}$ and $\myb_1 \le \myb_j \le \myb_{\myM}$ for all $i,j$, 
\begin{align}
	\max{(i,j)\in\W} \mya_{{\myM}+1-i}-\myb_j &= \mathrm{max}\{
	\mya_{\myM}-\myb_\ell, \mya_{{\myM}+1-k}-\myb_1 \}, \label{eq:maxab1} \\
	\max{(i,j)\in\W} \myb_{{\myM}+1-j}-\mya_i &= \mathrm{max}\{ \myb_{\myM}-\mya_k,
	\myb_{{\myM}+1-\ell}-\mya_1 \}. \label{eq:maxab2}
\end{align}
These maxima must be equal for \eqref{eq:identity5} to hold for large $y$.

By \eqref{eq:absym}, \eqref{eq:asym} and \eqref{eq:bsym2}, $\mya_{\myM}-\myb_\ell \ne
\myb_{{\myM}+1-\ell}-\mya_1$ and $\myb_{\myM}-\mya_k \ne \mya_{{\myM}+1-k}-\myb_1$. This leaves only two possibilities to equate the right-hand sides of \eqref{eq:maxab1} and \eqref{eq:maxab2}: Either
\begin{equation} \label{eq:abba1}
	\mya_{{\myM}+1-k}-\myb_1 < \mya_{\myM}-\myb_\ell = \myb_{\myM}-\mya_k > \myb_{{\myM}+1-\ell}-\mya_1
\end{equation}
or
\begin{equation} \label{eq:abba2}
	\mya_{\myM}-\myb_\ell < \mya_{{\myM}+1-k}-\myb_1 = \myb_{{\myM}+1-\ell}-\mya_1 >
	\myb_{\myM}-\mya_k.
\end{equation}
Equating the dominating terms of \eqref{eq:identity5} if \eqref{eq:abba1} is true yields
\begin{equation}
	v^{2y(\mya_{\myM}-\myb_\ell)-\mya_{\myM}^2-\myb_\ell^2} =
	v^{2y(\myb_{\myM}-\mya_k)-\myb_{\myM}^2-\mya_k^2}.
\end{equation}
In analogy with \eqref{eq:leading-term}--\eqref{eq:absym}, this
equality implies $\mya_k = -\mya_{\myM}$, which contradicts $\mya_k > \mya_1 =
-\mya_{\myM}$. Analogously, \eqref{eq:abba2} implies $\mya_{{\myM}+1-k} =
-\mya_1$, which contradicts $\mya_{{\myM}+1-k} < \mya_{\myM} = -\mya_1$. Hence, neither \eqref{eq:abba1} nor \eqref{eq:abba2} can be true. It can be concluded that Case 2, defined by the assumption \eqref{eq:bsym2}, cannot be true.

Since both Case 1 and Case 2 lead to contradictions, the assumptions
\eqref{eq:bsym1} and \eqref{eq:bsym2} are both false. This proves that
\eqref{eq:asym} is false, which implies $\mya_i = -\mya_{{\myM}+1-i},
\forall i$. Finally, $\myb_j = -\myb_{{\myM}+1-j}, \forall j$ follows
because $\mya_i$ and $\myb_j$ have equivalent roles in \eqref{eq:identity1}.
The case $y_0 \neq 0$ follows by
applying a shift in the coordinate system. 

\section{Proof of \theref{theor:lossless_patterns}}\label{Appendix.theor:lossy_patterns}

We first recall several facts about the max-log L-value $\lml(y)$ in
\eqref{eq:Lvalue_maxlog} and the exact L-value $\lex(y)$ in
\eqref{eq:Lvalue_exact} which will be used later on in the proof.

\begin{enumerate}
	\item[(F1)] The max-log L-value is a continuous piecewise linear
		function of the observation, where the slope of the linear pieces
		changes at the midpoints between neighboring constellation points
		labeled with the same bit \cite{Alvarado07d}.

	\item[(F2)] The max-log L-value has zero-crossings at midpoints
		between adjacent constellation points labeled with different bits
		\cite{Alvarado07d}. 

	\item[(F3)] The exact L-value is an analytic function. Indeed, the
		nominator and the denominator in~\eqref{eq:Lvalue_exact} are sums
		of exponential functions and therefore, they are analytic
		functions, as is their ratio. The logarithm of an analytic
		function is also analytic. 

\end{enumerate}

Note that (F2) implies that for any pattern, the max-log L-value has
at least one zero-crossing. 

The proof is structured as follows. We start by showing that for the
only two patterns that induce one zero-crossing (i.e.,
$\boldsymbol{p}_{\mathrm{I}} = [\mathbf{0}_{M/2}, \mathbf{1}_{M/2}]$
and $\mathrm{inv}(\boldsymbol{p}_{\mathrm{I}})$), the max-log L-value
is information lossless, regardless of the constellation. We then
proceed by considering patterns that induce \emph{exactly} two
zero-crossings. Such patterns are of the form
\begin{align}
	\label{eq:pattern_two_crossings}
	[\;\,\underbrace{b\,,\, \dots\,,\, b}_{\text{$a$}},\;
	\underbrace{\,\bar{b}\,,\, \dots\,,\,
	\bar{b}}_{\text{$M/2$}}, \underbrace{\,b\,,\, \dots\,,\, b\,}_{\text{$M/2-a$}}]
\end{align}
where $1 \leq a < M/2$. We show that a necessary and sufficient
condition for the max-log L-value to be information lossless is that
the pattern is equivalent to $\boldsymbol{p}_{\mathrm{II}} =
[\mathbf{0}_{M/4}, \mathbf{1}_{M/2}, \mathbf{0}_{M/4}]$ and the
constellation is symmetric. Lastly, we consider patterns that induce
more than two zero-crossings and show that in this case, the max-log
L-value can never be information lossless.

\subsection{One Zero-Crossing}

Consider the max-log L-value $\lml(y)$ in~\eqref{eq:Lvalue_maxlog} for
an arbitrary constellation and the pattern
$\boldsymbol{p}_{\mathrm{I}} = [\mathbf{0}_{M/2}, \mathbf{1}_{M/2}]$.
For a certain value $y$, let $a_t = \argmin{a \in \setS_{0}}{(y-a)^2}$
and $a_s = \argmin{a \in \setS_{1}}{(y-a)^2}$, where $\setS_0$ and
$\setS_1$ are the subconstellations with points labeled with 0 and 1,
respectively. The max-log L-value can then be written as $\lml(y) =
2\rho(a_s - a_t)y + \rho(a_t^2 - a_s^2)$, where $a_s$ and $a_t$ are piece-wise constant functions of $y$. Due to the structure of the
pattern, $a_s > a_t$ for any value of $y$, which implies that the
derivative $\mathrm{d}\lml(y)/\mathrm{d}y$ is positive whenever it
exists. (It does not exist whenever the slope of the linear pieces
changes, see (F1) above.) This, together with the fact that the
max-log L-value is a continuous function of the observation,
guarantees that $\lml(y)$ is strictly increasing. Therefore,
$\lml(y)$ is invertible, i.e., the observation $y$ can be recovered
from $\lml(y)$.  Hence, since the exact L-value is a function of $y$, it can be obtained from the
max-log L-value. The same is true for the pattern
$\mathrm{refl}(\boldsymbol{p}_{\mathrm{I}})$, in which case $\lml(y)$
is strictly decreasing. This claim holds for any constellation,
not necessarily symmetric ones.

\subsection{Two Zero-Crossings}

Next, consider an arbitrary constellation with a pattern that induces
\emph{exactly} two zero-crossings, i.e., a pattern of the form in
\eqref{eq:pattern_two_crossings}. Let $y_q$ and $y_r$ denote these two
zero-crossings situated between the constellation points $a_q$ and
$a_{q+1}$, and $a_r$ and $a_{r+1}$, respectively. This implies that
$p_q = p_{r+1} = \bar{p}_{q+1} = \bar{p}_{r}$. Without loss of
generality, we assume $p_q = 0$. This is illustrated
in~\figref{fig:invert_proof}.
\begin{figure}
	\centering \includegraphics{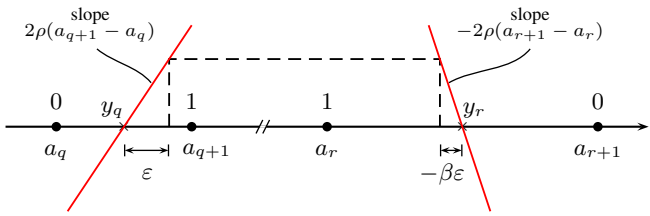} \caption{Schematic representation
	of two neighboring zero-crossings for the max-log L-value $\lml(y)$
	(red lines).}
	\label{fig:invert_proof}
\end{figure}
From the figure, we see that there exists an $\eps > 0$, such that
\begin{align}
	\lml(y_q + \gamma) = \lml(y_r - \beta \gamma) \quad \text{ for }
	\gamma \in [-\eps, +\eps] \EqCo
\end{align}
where 
\begin{equation}\label{eq:beta}
\beta = \frac{a_{q+1} - a_{q}}{a_{r+1} - a_{r}} > 0.
\end{equation}

According to Lemma~\ref{lemma:min_suf_stat}, for the max-log
L-value to be information lossless, the exact L-value should be
recoverable from the max-log L-value, i.e., the exact L-value has to
satisfy the condition
\begin{align}\label{eq:sym_lex_part}
	\lex(y_q + \gamma) = \lex(y_r - \beta \gamma) \quad \text{ for }
	\gamma \in [-\eps, +\eps].
\end{align}
If this condition is not satisfied, more than one value of $\lex(y)$ will
correspond to one value of $\lml(y)$. The
condition~\eqref{eq:sym_lex_part} can be rewritten as 
\begin{equation}
	v(\gamma) = 0 \quad \text{ for } \gamma \in [-\eps, + \eps]\EqCo
\end{equation}
where $v(\gamma) = \lex(y_q + \gamma) - \lex(y_r - \beta \gamma)$ is
an analytic function (see (F3)). If an
analytic function is zero on an interval, it has to be zero everywhere
it is defined, i.e., $v(\gamma) = 0$ for $\gamma \in \realR$ or
\begin{align}\label{eq:sym_lex_part1}
	\lex(y_q + \gamma) = \lex(y_r - \beta \gamma) \quad \text{ for }
	\gamma \in \mathbb{R}.
\end{align}
Using the substitution $y = \gamma - (y_r - y_q)/(1+\beta)$, we
can rewrite \eqref{eq:sym_lex_part1} as
\begin{align}
	\label{eq:sym_lex_part2}
	\lex(y_0 + y) = \lex(y_0 - \beta y) \quad \text{ for }
	y \in \mathbb{R}.
\end{align}
where $y_0 = (y_r + \beta y_q)/(1+\beta)$. We next argue that we must
have $\beta = 1$. Indeed, since $\lex(y)$ is an analytic function,
it follows from~\eqref{eq:sym_lex_part2} that 
\begin{equation}
	\lex{}^{(k)}(y_0 + y) = (-\beta)^{k} \lex{}^{(k)}(y_0 - \beta y).
\end{equation}
for $k \in \mathbb{N}$, where $\lex{}^{(k)}$ denotes the $k$th
derivative with respect to $y$. In particular, for $y = 0$, we get 
\begin{equation}
	\lex{}^{(k)}(y_0) = (-\beta)^{k} \lex{}^{(k)}(y_0) \quad \text{  for } k \in \mathbb{N} \EqCo
\end{equation}
which can only hold if either $\beta = 1$ or $\lex{}^{(k)}(y_0) = 0$
for all $k \in \mathbb{N}$. Assume that the latter holds. We can write
$\lex(y_0 + y)$ as a Taylor expansion around $y_0$ as
\begin{equation}
\lex(y_0 + y) = \sum_{k=0}^{\infty} \frac{\lex{}^{(k)}(y_0)}{k!} y^n =
\lex(y_0).
\end{equation}
However, since $\lex(y)$ cannot be constant for all $y \in \mathbb{R}$,
we therefore conclude that we must have $\beta = 1$, or
\begin{align}\label{eq:sym_lex_part4}
	\lex(y_0 + y) = \lex(y_0 - y) \quad \text{ for }
	y \in \mathbb{R}.
\end{align}
This means that the exact L-value has to be a symmetric function
around $y_0 = (y_r + y_q)/2$, i.e., the midpoint between the two
zero-crossings. According to Lemma \ref{lemma:exact_symmetric}, the
constellation therefore has to be symmetric around $y_0$ and the
pattern has to be $\boldsymbol{p} = [\mathbf{0}_{M/4},
\mathbf{1}_{M/2}, \mathbf{0}_{M/4}]$ in order to satisfy
$\boldsymbol{p} = \mathrm{refl}(\boldsymbol{p})$.  

Consider now a symmetric constellation around $y_0$ and the pattern
$\boldsymbol{p}_{\mathrm{II}} = [\mathbf{0}_{M/4}, \mathbf{1}_{M/2},
\mathbf{0}_{M/4}]$. In this case, both the exact and the max-log
L-value are symmetric functions around $y_0$, see Lemma
\ref{lemma:exact_symmetric} and Remark \ref{remark:maxlog_symmetric}.
To show that the max-log L-value is information lossless, it is
therefore enough to show that $|y-y_0|$ is recoverable from $\lml(y)$.
This can be done by showing that $\lml(y)$ is strictly decreasing 
for $y \ge y_0$, similarly as before.

\subsection{More Than Two Zero-Crossings}

Lastly, consider an arbitrary constellation $\mathcal{S}$ with a
pattern such that the max-log L-value $\lml(y)$ has \emph{more than
two} zero-crossings. Now, consider any two pairs of neighboring
zero-crossings and denote their (distinct) midpoints by $y_0$ and
$y_0'$, respectively. Under the assumption that the max-log L-value is
information lossless and using the same arguments as above, we find
that the exact L-value should satisfy both
\begin{align}
	\lex(y_0 + y) = \lex(y_0 - y) \quad \text{ for } y \in \mathbb{R} \EqCo
\end{align}
and
\begin{align}
	\lex(y_0' + y) = \lex(y_0' - y) \quad \text{ for } y \in \mathbb{R}.
\end{align}
In the light of Remark \ref{remark:exact_symmetric_unique}, we
conclude that it is not possible to satisfy both conditions
simultaneously, i.e., we conclude that a max-log L-value with more
than two-crossings cannot be information lossless.

\newif\ifshow
\showfalse

\ifshow

\section{Old}

\begin{lemma}
	If the max-log L-value has more than two zero-crossings, it cannot
	be information lossless, i.e., there cannot exist a function such
	that $\Lex = f(\Lml)$. 
\end{lemma}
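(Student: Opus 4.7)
The plan is to reduce the claim to the symmetry analysis already carried out for the two-zero-crossings case and then invoke the uniqueness of the symmetry point of $\lex$ from Remark~\ref{remark:exact_symmetric_unique} to derive a contradiction.

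First, I would apply Lemma~\ref{lemma:min_suf_stat}: the existence of a function $f(\cdot)$ with $\Lex = f(\Lml)$ is equivalent to $\MI_{\Lex} = \MI_{\Lml}$, which in turn forces $\lex(y) = \lex(y')$ whenever $\lml(y) = \lml(y')$. Thus if two distinct observations produce the same max-log L-value, they must also produce the same exact L-value.

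Next, I would focus on any single pair of neighboring zero-crossings $y_q$ and $y_r$ of $\lml(\cdot)$. Using (F1) and the continuity and piecewise-linearity of $\lml$, there exist $\varepsilon > 0$ and $\beta = (a_{q+1}-a_q)/(a_{r+1}-a_r) > 0$ such that $\lml(y_q + \gamma) = \lml(y_r - \beta\gamma)$ for all $\gamma \in [-\varepsilon,+\varepsilon]$. The lossless assumption then forces the exact L-value to satisfy the same identity on this interval. Since $\lex$ is analytic (F3), the identity extends to all of $\mathbb{R}$, and then exactly the argument used in the ``Two Zero-Crossings'' subsection (matching Taylor coefficients at the fixed point $y_0 = (y_r + \beta y_q)/(1+\beta)$) forces $\beta = 1$ and
\begin{equation}
\lex(y_0 + y) = \lex(y_0 - y), \quad y \in \mathbb{R},
\end{equation}
with $y_0 = (y_q + y_r)/2$. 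So every pair of neighboring zero-crossings of $\lml$ forces $\lex$ to be symmetric about their midpoint.

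Finally, with more than two zero-crossings $z_1 < z_2 < z_3 < \dots$ of $\lml$, I would take the two adjacent pairs $(z_1, z_2)$ and $(z_2, z_3)$. Their midpoints $y_0 = (z_1+z_2)/2$ and $y_0' = (z_2+z_3)/2$ are distinct because $z_1 \neq z_3$. The previous step then makes $\lex$ symmetric about both $y_0$ and $y_0'$, contradicting Remark~\ref{remark:exact_symmetric_unique} (the symmetry point of a non-periodic analytic function is unique). Hence no such $f$ can exist, which is what we wanted to show.

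The main obstacle is the analytic-continuation step that upgrades a local identity $\lex(y_q + \gamma) = \lex(y_r - \beta\gamma)$ on a small interval to a global identity on $\mathbb{R}$, together with the Taylor-coefficient matching that rules out $\beta \neq 1$. This is precisely the delicate piece already carried out in the proof of Theorem~\ref{theor:lossless_patterns}; everything else is bookkeeping of zero-crossing midpoints combined with the uniqueness remark.
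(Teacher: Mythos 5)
Your proposal is correct and takes essentially the same route as the paper: the paper's ``More Than Two Zero-Crossings'' step likewise applies the two-zero-crossing machinery (local identity around each neighboring pair of zero-crossings, analytic continuation of $\lex$, and the Taylor-coefficient argument forcing $\beta=1$) to conclude that $\lex$ must be symmetric about each of the two distinct midpoints, and then invokes Remark~\ref{remark:exact_symmetric_unique} for the contradiction. There is no gap.
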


\begin{proof}

Consider an arbitrary constellation $\mathcal{S}$ with an arbitrary
pattern $\boldsymbol{p}$ such that the max-log L-value $\lml(y)$ has
\emph{at least} two zero-crossings. under the assumption that the
max-log L-value is information lossless (in the sense of Lemma
\ref{lemma:min_suf_stat}). An outline of the proof is given as
follows. 

\begin{itemize}
	\item We examine two neighboring zero-crossings and show that the
		max-log L-value satisfies a local symmetry condition.

	\item Using Lemma \ref{lemma:min_suf_stat}, we argue that if the
		exact L-value is recoverable from the max-log L-value, this local
		symmetry condition needs to be satisfied also for the exact
		L-value

	\item Using the fact that the exact L-value is an analytic function,
		we argue that if the symmetry condition is locally satisfied, it
		also has to be satisfied \emph{globally} 

	\item We then argue that the global symmetry condition for the exact
		L-value corresponds to $\lex(y_0 + y) = \lex(y_0-y)$ for $y \in
		\mathbb{R}$.

	\item We then repeat the same steps for two other neighboring
		zero-crossings (we know that there is at least one more
		zero-crossing) to obtain $\lex(y+y_0') = \lex(-y+y_0')$ with $y_0'
		\neq y_0$ as an additional condition for information losslessness

	\item In the light of Remark \ref{remark:exact_symmetric_unique}, we
		conclude that it is not possible to satisfy both conditions
		simultaneously, i.e., we conclude that a max-log L-value with more
		than two-crossings cannot be information lossless. 

\end{itemize}

Next, consider a pattern with at least two zero crossings as shown
in~\figref{fig:invert_proof}.  Let $y_q$ and $y_r$ denote any two
neighboring zero-crossings situated between the constellation points
$a_q$ and $a_{q+1}$, and $a_r$ and $a_{r+1}$, respectively. This
implies that $p_q = p_{r+1}  = \bar{p}_{q+1} = \bar{p}_{r}$. Without
loss of generality, we assume $p_q = 0$. Let $y_0 = (y_q + y_r)/2$ and
without loss of generality we assume $y_0 \ge 0$. This is illustrated
in~\figref{fig:invert_proof}.

The max-log L-value in~\eqref{eq:Lvalue_maxlog} is a continuous
piecewise linear function, where the slope of the pieces changes at
the midpoints between neighboring constellation points labeled with the same
bit. Since all constellation points are distinct, the slope of the
max-log L-value cannot change at the zero-crossing, i.e., there exists
an $\eps > 0 $ such that
\begin{align}\label{eq:lml_piece1}
	\lml(y) = 2\rho(a_{q+1} - a_{q})(y -y_q) \\\text{ for } y \in [y_q -
	\eps, y_q + \eps]
\end{align}
and
\begin{align}\label{eq:lml_piece2}
	\lml(y) = 2\rho(a_{r+1} - a_{r})(y_r - y) \\\text{ for } y \in [y_r
	- \eps, y_r + \eps].
\end{align}
The linear functions in~\eqsref{eq:lml_piece1}{eq:lml_piece2} are
shown with dashed lines in~\figref{fig:invert_proof}. Basically,
$\eps$ indicates that there are nonzero intervals of $y$ for which
$\lml(y)$ is a linear function that crosses the zero-level. One can
then identify the values of $y$ that give the same value of $\lml(y)$.
Based on~\eqsref{eq:lml_piece1}{eq:lml_piece2}, we can show that 
\begin{align}\label{eq:lml_almost_sym}
	\lml(y + y_0) = \lml(-\beta y +(2-\beta)y_0 + (\beta-1)y_r) \\\text{
	for } y \in [y_r-y_0 - \eps, y_r-y_0 + \eps]\EqCo
\end{align}
where 
\begin{equation}\label{eq:beta}
\beta = \frac{a_{r+1} - a_{r}}{a_{q+1} - a_{q}}.
\end{equation}
In the following, we show that $\beta$ has to equal one.

According to Lemma~\ref{lemma:min_suf_stat}, for the max-log approximation to be information lossless, the exact L-value should be recoverable from the max-log L-value, i.e., the exact L-value should satisfy the condition
\begin{multline}\label{eq:sym_lex_part}
\lex(y + y_0) = \lex(-\beta y +(\beta-k)y_0 + (\beta-1)y_r) \\\text{
for } y \in [y_r-y_0 - \eps, y_r-y_0 + \eps].
\end{multline}
If this condition is not satisfied, several values of $\lex(y)$ will correspond to one value of $\lml(y)$.

The nominator and the denominator of the exact L-value
in~\eqref{eq:Lvalue_exact} are sums of exponential functions and
therefore, they are analytic functions, as is their ratio. The
logarithm of an analytic function is analytic and hence the exact
L-value is an analytic function. The condition~\eqref{eq:sym_lex_part}
can be rewritten as 
\begin{equation}
	v(y) = 0, \text{ for } y \in [y_r-y_0 - \eps, y_{r}-y_0 +
	\eps]\EqCo
\end{equation}
where $v(y) = \lex(y_0 - y) - \lex(-\beta y +(2-\beta)y_0 +
(\beta-1)y_r)$ is an analytic function. If an analytic function is
zero on an interval, it has to be zero everywhere it is defined, i.e.,
$v(y) = 0$ for $y \in \realR$ or
\begin{equation}\label{eq:sym_lex}
\lex(y + y_0) = \lex(-\beta y +(2-\beta)y_0 + (\beta-1)y_r) \text{ for } y \in \realR.
\end{equation}
Since $\lex(y) \rightarrow \lml(y)$ $\forall y$ when $\SNR \rightarrow \infty$, the same should hold for the max-log L-value, i.e.,
\begin{equation}\label{eq:ml_lex} denoted
\lml(y + y_0) = \lml(-\beta y +(2-\beta)y_0 + (\beta-1)y_r) \text{ for } y \in \realR
\end{equation}
for $\SNR \rightarrow \infty$. Since $\SNR$ appears as a scaling factor in~\eqref{eq:Lvalue_maxlog}, \eqref{eq:ml_lex} in fact holds for any SNR values. The max-log L-value will then achieve the maximum value on the interval $[y_{q}, y_r]$ when the arguments of the right-hand side and the left-hand side in~\eqref{eq:ml_lex} are equal, i.e., at $y' = y'' + y_0$, where $y'' + y_0 = -\beta y'' +(2-\beta)y_0 + (\beta-1)y_r$. We can show that the maximum is achieved at
\[
y' = y_0 + \frac{\beta-1}{\beta+1}\frac{y_r - y_q}{2}.
\] 
 On the other hand, the maximum of the L-value should be at the midpoint between the constellation points $a_q$ and $a_{r+1}$ labeled with the zero bit since this is the point where the slope of the L-value in~\eqref{eq:Lvalue_maxlog} changes its sign, i.e., $y' = (a_{r+1}+a_{q})/2 = y_0 + \left( (a_{r+1}-a_{r}) - (a_{q+1} - a_{q})\right)/4$.
Therefore, the following should hold
\begin{equation}\label{eq:people_equal_one}
	\frac{\beta-1}{\beta+1}\frac{y_r - y_q}{2} = y_0 + \frac{ (a_{r+1}-a_{r}) - (a_{q+1} - a_{q})}{4}.
\end{equation}
Using the definition of $\beta$ in~\eqref{eq:beta}, one can easily
show that~\eqref{eq:people_equal_one} holds only for $\beta = 1$. We
thus conclude that 
\begin{align}
	\lml(y+y_0) = \lml(-y+y_0) \text{ for } y \in \realR
\end{align}
is a necessary condition for information losslessness. 

\bigskip

We
next show that $y_0$ has to equal zero to satisfy~\eqref{eq:ml_lex}
for $\beta = 1$.

From~\eqref{eq:Lvalue_maxlog} it follows that $\sign(\lml(y)) =
(-1)^{\bar{p}_1}$ for $y< a_1$  and $\sign(\lml(y)) =
(-1)^{\bar{p}_M}$ for $y > a_M$. Letting $y = a_1 - \gamma - y_0$ for
some $\gamma > 0$ in~\eqref{eq:ml_lex} gives
\begin{equation}\label{eq:lml_symmetr}
	\lml(2y_0 -a_1 +\gamma) = \lml(a_1 -\gamma).
\end{equation}
Since $a_M = -a_1$, it is easy to see that $2y_0 -a_1 +\gamma>a_M$, hence $p_1 = p_M$. Without loss of generality, $p_1$ is assumed to be zero. Let $a_k = \argmin{a \in \setS_{1}}{(2y_0 - a_1 + \gamma - a)^2}$ and $a_l = \argmin{a \in \setS_{1}}{(a_1 -\gamma - a)^2}$, i.e., the points $a_k$ and $a_l$ are the outermost constellation points labeled with one from the right and from the left, respectively. Using the introduced notation, \eqref{eq:lml_symmetr} can be rewritten as
\begin{multline}
(2y_0 -a_1  + \gamma - a_M)^2 - (2y_0 -a_1 + \gamma  - a_k)^2 \\= (a_1
- \gamma -a_1)^2 - (a_1 - \gamma -a_l)^2\EqCo
\end{multline}
which should hold for any $\gamma>0$. Using $a_M = -a_1$, $y_0$ can be expressed as
\begin{equation}\label{eq:y_0}
	y_0 = \frac{(a_k +a_l)(2a_1 +a_k - a_l - 2\gamma)}{4(a_1 + a_k)}.
\end{equation}
By definition, $y_0$ is a constant and is independent of $\gamma$,
which implies that $a_k = -a_l$ in~\eqref{eq:y_0} and $y_0 = 0$. This
in turn means that for any two neighboring zero-crossings $y_q$ and
$y_r$, the midpoint $(y_q + y_r)/2$ has to be zero, i.e., only two
zero-crossings are allowed for the pattern to be invertible.
Therefore, an invertible pattern has to be of the form $\boldsymbol{p}
= [\mathbf{0}_{n}, \mathbf{1}_{M/2}, \mathbf{0}_{M/2-n}]$ for $n =
1,\dots, M/2-1$. The only pattern of such a form that satisfies $a_k =
-a_l$ is $\boldsymbol{p}_{\mathrm{II}} = [\mathbf{0}_{M/4},
\mathbf{1}_{M/2}, \mathbf{0}_{M/4}]$, which completes the proof.
\end{proof}

\fi

\section{Proof of Theorem \ref{theorem:harmonized_gmi}}
\label{Appendix.theor:harmonized_gmi}

\newcommand{\boldb}{\boldsymbol{b}}
\newcommand{\boldB}{\boldsymbol{B}}
\newcommand{\boldy}{\boldsymbol{y}}
\newcommand{\boldY}{\boldsymbol{Y}}
\newcommand{\M}{\mathsf{M}}

The proof follows the steps of the achievable rate analysis for
multi-level coding presented in~\cite[Ch.~3]{Fabregas08_Book}. In
order to make the proof consistent with \cite[Ch.~3]{Fabregas08_Book},
we consider a decoder that operates in the probability
domain\footnote{Strictly speaking, it is the probability domain only
for exact L-values.} according to 
\begin{equation}
	\label{eq:bicm_decoder_equivalent}
\hat{\m}(\boldy) =  \argmax{\m \in \left[|\setC|\right]} q(\bm{b}(\m),
\bm{y}) \EqCo
\end{equation}
where $\m$ denotes a message, $\boldb(\m) \in \setC$ the codeword
corresponding to message $\m$, and 
\begin{equation}
	q(\bm{b}, \bm{y}) = \prod_{i=1}^{N} \prod_{j=1}^{m} q_j(b_{i,j}, y_i)\EqCo 
\end{equation}
with $q_j(b_{i,j}, y_i) = \exp(b_{i,j} l_{j}(y_i))$. Observe
that the decoder in \eqref{eq:bicm_decoder_equivalent} is equivalent
to the decoder in \eqref{eq:bicm_decoder}. 

We consider an ensemble of length-$mN$ codes $\setC$ obtained as the
Cartesian product of $m$ binary codes of length $N$ according to
$\setC = \setC_1 \times \dots \times \setC_m$.
The codewords in each code $\mathcal{C}_j$ are assumed to be composed
of \gls{iud}~bits. The codewords of the code are equiprobable, the rate
of the bit-level codes is given by $R_j =
\log_{2}(|\setC_j|)/N$ and the overall rate is $R =
\sum_{j}R_j$. In the following, we only consider the case of two bit
positions, i.e., $m=2$. The generalization to a larger number of bit
positions is straightforward. 

We let $\boldb_1(\m_1)$ and $\boldb_2(\m_2)$ denote the codewords in
$\setC_1$ and $\setC_2$ corresponding to individual messages $\m_1$ and $\m_2$,
respectively. When averaging over codebooks, the codewords become
random vectors $\boldB_1(\m_1)$ and $\boldB_2(\m_2)$, where
\begin{align}
	\Pr{\boldB_2(\m_2) = \boldb} = \Pr{\boldB_1(\m_1) = \boldb} =  \prod_{i=1}^N p_B(b_i)
\end{align}
for all $\m_1 \in [|\setC_1|]$ and $\m_2 \in
[|\setC_2|]$.\footnote{Even though in this paper the bits are
\gls{iud}, i.e., $p_B(b_i) = 1/2$, we keep the notation general to be consistent with~\cite[Ch.~3]{Fabregas08_Book}.} Due to the
fact that the code is constructed as a product code, for any given
codes $\setC_1$ and $\setC_2$ and all $\m_2 \in [|\setC_2|]$
\begin{align}
	\hat{\m}_1(\boldy) &= \argmax{\m_1 \in \left[|\setC_1|\right]}
	q([\bm{b}_1(\m_1),  \boldb_2(\m_2)], \bm{y}) \\
	&=  \argmax{\m_1 \in
	\left[|\setC_1|\right]} q_1(\bm{b}_1(m_1), \bm{y})\label{eq:decoder_decompose} \EqCo
\end{align}
and analogously for the other bit-level code. 

Since the codewords are equiprobable, the probability of error
averaged over the ensemble of randomly generated codes is given by
\begin{equation}\label{eq:gmi_pe}
	\pebar = \frac{1}{|\setC_1||\setC_2|} \sum_{\m_1 \in
	\left[|\setC_1|\right]} \sum_{\m_2 \in \left[|\setC_1|\right]}
	\pebar(\m_1, \m_2) \EqCo
\end{equation}
where $\pebar(\m_1, \m_2)$ denotes the ensemble-averaged error
probability conditional on messages $\m_1$ and $\m_2$ being
transmitted. However, due to the random code construction, the
probability of error is independent of the particular transmitted
messages and hence $\pebar = \pebar(\m_1, \m_2)$ for any given $\m_1$
and $\m_2$. 

The probability $\pebar(\m_1, \m_2)$ can be calculated as
\begin{equation}\label{eq:funcf}
\pebar(\m_1, \m_2) = \Expect{f(\boldB_1(\m_1), \boldB_2(\m_2), \boldY)} \EqCo
\end{equation}
where $f(\cdot)$ is defined in~\eqref{eq:big_phat_eq}.  For a given
observation $\boldy$, the function $f(\cdot)$
in~\eqref{eq:big_phat_eq} can be upperbounded
by~\eqref{eq:err_terms}, where~\eqref{eq:meh0} follows from the union
bound and~\eqref{eq:err_terms} follows
from~\eqref{eq:decoder_decompose}.
\begin{figure*}[!t]
\normalsize
\setcounter{equation}{71}
\begin{align}\label{eq:big_phat_eq}
&f(\boldb, \boldb', \boldy)= \Pr{\hat{\m}_1(\boldy) \neq \m_1 \cup
\hat{\m}_2(\boldy) \neq \m_2 \,|\,
	\boldB_1(\m_1) = \boldb, \boldB_2(\m_2) = \boldb', \boldY = \boldy} \\
	&\leq
	\Pr{\hat{\m}_1(\boldy) \neq \m_1  \,|\,\boldB_1(\m_1) = \boldb, \boldB_2(\m_2) = \boldb', \boldY = \boldy   }+
	\Pr{\hat{\m}_2(\boldy) \neq \m_2  \,|\,\boldB_1(\m_1) = \boldb, \boldB_2(\m_2) = \boldb', \boldY = \boldy   }\label{eq:meh0}\\
	&=
	\Pr{\hat{\m}_1(\boldy) \neq \m_1  \,|\,\boldB_1(\m_1) = \boldb, \boldY = \boldy}
	+
	\Pr{\hat{\m}_2(\boldy) \neq \m_2  \,|\,\boldB_2(\m_2) = \boldb', \boldY = \boldy}\label{eq:err_terms}.
\end{align}
\setcounter{equation}{74}
\hrulefill
\vspace*{4pt}
\end{figure*}
Using again the union bound and following the standard steps of Gallager's error analysis~\cite{Gallager1968}, the first probability
in~\eqref{eq:err_terms} can be upperbounded as
\begin{align}
	&\Pr{\hat{\m}_1(\boldy) \neq \m_1  |\boldB_1(\m_1) = \boldb, \boldY = \boldy}\nonumber\\
	& = \Pr{\bigcup_{\m' \neq \m_1} \left\{ \hat{\m}_1(\boldy) = \m'  |\boldB_1(\m_1) = \boldb, \boldY = \boldy \right\}}\nonumber\\
	& \le \left(\sum_{\m' \neq \m_1} \Pr{\hat{\m}_1(\boldy) = \m'
	|\boldB_1(\m_1) = \boldb, \boldY = \boldy}\right)^{\gamma} \label{eq:meh1}\\
	& = (|\setC_1| - 1)^{\gamma} \left(\Pr{\hat{\m}_1(\boldy) = \m'
	|\boldB_1(\m_1) = \boldb, \boldY = \boldy}\right)^{\gamma} \label{eq:meh2}
\end{align}
for all $0\le\gamma\le 1$ and any $\m' \neq \m_1$. We further upperbound the probability in~\eqref{eq:meh2} as
\begin{multline}
	\Pr{\hat{\m}_1(\boldy) = \m'  |\boldB_1(\m_1) = \boldb, \boldY = \boldy} \\
	\stackrel{}{=} \hspace{-0.8cm}\sum_{\substack{\boldb' \in \mathcal{B}^{N},\\q_1(\boldb', \boldy))\ge q_1(\boldb, \boldy))}} \hspace{-0.8cm} p_{\boldB}(\boldb')
	\le \sum_{\boldb' \in\mathcal{B}^{N}}
	p_{\boldB}(\boldb')\frac{q_1(\boldb',
	\boldy)^{s_1}}{q_1(\boldb,\boldy)^{s_1}} \EqCo
\end{multline}
where the inequality holds for any $s_1 \geq 0$ since $q_1(\boldb', \boldy))\ge
q_1(\boldb, \boldy))$ and the sum over all
codewords gives an upper bound. Substituting the obtained result
into~\eqref{eq:meh2}, we have
\begin{align} 
	&\Pr{\hat{\m}_1(\boldy) \neq \m_1  |\boldB_1(\m_1) = \boldb, \boldY = \boldy}\nonumber\\
	& \le (|\setC_1| - 1)^{\gamma} \left(\sum_{\boldb'
	\in\mathcal{B}^{N}} p_{\boldB}(\boldb')\frac{q_1(\boldb',
	\boldy)^{s_1}}{q_1(\boldb,\boldy)^{s_1}}\right)^{\gamma}.\label{eq:pr_upper_bound}
\end{align}

Averaging over all possible codewords $\boldB_1(\m_1)$ and the observations $\boldY$ gives
\begin{align} 
	&\Expect{\Pr{\hat{\m}_1(\boldy) \neq \m_1  |\boldB_1(\m_1) = \boldb, \boldY = \boldy}}\nonumber\\
	& \le \Expect{(|\setC_1| - 1)^{\gamma} \left(\sum_{\boldb'
	\in\mathcal{B}^{N}} p_{\boldB}(\boldb')\frac{q_1(\boldb',
	\boldY)^{s_1}}{q_1(\boldB,\boldY)^{s_1}}\right)^{\gamma}}.\label{eq:meh3}\\
	& = (|\setC_1| - 1)^{\gamma} \left(\Expect{\left(\sum_{b'
	\in\mathcal{B}} p_{B}(b')\frac{q_1(b',
	Y)^{s_1}}{q_1(B_1,Y)^{s_1}}\right)^{\gamma}}\right)^{N} \EqCo\label{eq:meh4}
\end{align}
where to go from~\eqref{eq:meh3} to~\eqref{eq:meh4} we used the fact
that the channel is memoryless. 

Applying similar steps to the second probability
in~\eqref{eq:err_terms} and combining~\eqref{eq:funcf},
\eqref{eq:err_terms}, \eqref{eq:meh4}, and the definition of rate for
the constituent codes, we can upperbound the probability of error as 
\begin{equation}\label{eq:pe_error}
\pebar \leq 2^{-N(E_1(\gamma, s_1) - \gamma R_1)} + 2^{-N(E_2(\gamma,
s_2) - \gamma R_2)} \EqCo
\end{equation}
where
\begin{align}
	E_j(\gamma, s_j) = -\log \Expect{\left( \sum_{b' \in \mathcal{B}} p_{B}(b')
	\frac{q_j(b', Y)^{s_j} }{ q_j(B_j, Y)^{s_j} }
	\right)^\gamma }. 
\end{align}
Observe that \eqref{eq:pe_error} holds for any choice of $0 \leq
\gamma
\leq 1$, $s_1 \geq 0$, and $s_2 \geq 0$. Furthermore, the probability of
error vanishes if $E_j(\gamma, s_j) > \gamma R_j$ for given $s_j \geq 0 $ for
$j = 1,2$. In particular, if all rates satisfy
\begin{align}\label{eq:meh5}
	R_j &< {\sup_{s_j \geq 0}} \lim_{\gamma \rightarrow
	0}{}{\frac{E_j(\gamma, s_j)}{\gamma}}\\
	&= \GMI_{L_j}\EqCo
\end{align}
we have $\pebar \to 0$ as $N \to \infty$. Evaluating the right-hand
side of~\eqref{eq:meh5} by the means of derivative gives the bit-level
\gls{GMI} in~\cite[eq.~(17)]{Nguyen11}. Hence, $\pebar \to 0$ as $N \to \infty$ if $R = \sum_j R_j < \sum_j \GMI_{L_j}$, which concludes the proof.

\section{Convexity of Mutual Information}
\label{Appendix.theor:mutual_information_convexity}
\newcommand{\setG}{\mathcal{G}}

For a fixed input distribution $f_X(x)$, the mutual information
$I(X;Y)$ is a convex function in the channel law $f_{Y|X}(y|x)$
\cite[Th.~2.7.4]{Cover2006second}. The following theorem
particularizes this result to the case of binary-input,
continuous-output channels and provides necessary and sufficient
conditions for equality. This theorem is used in the proof of
Corollaries~\ref{corol:symmetr} and~\ref{corol:mix}.

\begin{theorem}
	\label{theorem:mutual_information_convexity}
	Let $B$ be a binary RV and let $S$ be a discrete RV independent of
	$B$ taking values in $[m]$, where $m \in \mathbb{N}$. Furthermore,
	let $\{W_j(\cdot|\cdot)\}$, for $j \in [m]$, be a collection of
	binary-input, continuous-output channels with domains $\mathcal{G}_j
	= \{l \in \mathbb{R} : W_j(l|1) \neq 0 \text{ and } W_j(l|0) \neq 0
	\}$. Given $B = b$ and $S=s$, let $L$ be a continuous RV distributed
	according to $W_s(\cdot|b)$. For $j \in [m]$, define the functions
	$g_j$ as
	\begin{align}
		\label{eq:convexity_correction}
		g_j(l) = \log \left( \frac{W_j(l|1)}{W_j(l|0)} \right).
	\end{align}
	Then, given a fixed distribution on $B$ and
	$S$, we have $I(B;L) \leq I(B;L|S)$ with equality if and only if
	\begin{align}
		g_j(l) = g_{j'}(l), \qquad \text{almost everywhere on $\mathcal{G}_j
		\cap \mathcal{G}_{j'}$}
	\end{align}
	for all $j,j' \in [m]$. 
\end{theorem}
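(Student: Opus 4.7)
The plan is to derive the inequality and the equality condition from a single identity obtained via the chain rule for mutual information, which reduces the problem to characterizing conditional independence of $B$ and $S$ given $L$.

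First, I would invoke the chain rule for mutual information in two ways on $I(B;L,S)$:
\begin{align}
I(B;L,S) = I(B;L) + I(B;S\mid L) = I(B;S) + I(B;L\mid S).
\end{align}
Since $B$ and $S$ are independent by hypothesis, $I(B;S)=0$, giving the clean identity $I(B;L\mid S) - I(B;L) = I(B;S\mid L)$. Nonnegativity of conditional mutual information immediately yields $I(B;L)\le I(B;L\mid S)$, with equality if and only if $I(B;S\mid L)=0$, i.e., $B$ and $S$ are conditionally independent given $L$.

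Next I would translate this conditional independence into a condition on the functions $g_j$. Using Bayes' rule, for any $j\in[m]$ with $p_S(j)>0$ and any $l$ in the support of the conditional law of $L$ given $S=j$,
\begin{align}
\log\frac{\Pr{B=1\mid L=l,S=j}}{\Pr{B=0\mid L=l,S=j}} = g_j(l) + \log\frac{p_B(1)}{p_B(0)}.
\end{align}
Conditional independence $B\perp S \mid L$ is equivalent to saying that this log-posterior ratio does not depend on $j$ for almost every $(l,j)$ in the joint support of $(L,S)$. Since the joint support, restricted to $S=j$, is (up to a Lebesgue-null set) the set $\mathcal{G}_j$ where both $W_j(l\mid 0)$ and $W_j(l\mid 1)$ are nonzero, this is precisely the requirement that $g_j(l)=g_{j'}(l)$ almost everywhere on $\mathcal{G}_j\cap\mathcal{G}_{j'}$ for every pair $j,j'\in[m]$.

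The main delicacy will be the handling of measure-theoretic technicalities in the last step: one needs to argue that the conditional independence statement, which is a statement about conditional distributions holding almost surely under $f_{L,S}$, is indeed captured by Lebesgue-a.e.\ equality of $g_j$ and $g_{j'}$ on the overlap $\mathcal{G}_j\cap\mathcal{G}_{j'}$, and that points outside each $\mathcal{G}_j$ (where one of $W_j(\cdot\mid 0),W_j(\cdot\mid 1)$ vanishes) can be safely ignored because they occur with probability zero or because $B$ is determined from $L$ there and hence contributes nothing to $I(B;S\mid L)$. Once this book-keeping is in place, the equivalence of the equality condition with the stated pointwise condition on $\{g_j\}$ follows without further computation.
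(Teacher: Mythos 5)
Your proof is correct in substance but takes a genuinely different route from the paper's. The paper proves the inequality by writing $I(B;L|S)$ as an $f_S$-weighted sum of integrals and applying the log-sum inequality to compare it with the mutual information of the mixture channel $W(l|b)=\sum_j f_S(j)W_j(l|b)$; the equality condition is then read off from the equality case of the log-sum inequality, namely that $W_j(l|b)/\sum_{b'}f_B(b')W_j(l|b')$ is independent of $j$ for almost all $l$, which is the same posterior-ratio condition you arrive at. You instead use the chain-rule identity $I(B;L|S)-I(B;L)=I(B;S|L)$ (valid because $B$ and $S$ are independent), so that the gap is itself a conditional mutual information and equality is precisely conditional independence of $B$ and $S$ given $L$, which Bayes' rule converts into equality of the $g_j$. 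Your route is more conceptual and identifies the loss term explicitly; the paper's is more computational but delivers the almost-everywhere statement directly from the log-sum equality case. One caveat: your claim that points where exactly one of $W_j(\cdot|0),W_j(\cdot|1)$ vanishes can be ignored because ``$B$ is determined from $L$ there and hence contributes nothing to $I(B;S|L)$'' is not true in general --- if $B$ is determined given $(L,S=j)$ but not given $(L,S=j')$, then $S$ is informative about $B$ at such $l$, and a criterion constraining $g_j$ only on $\mathcal{G}_j\cap\mathcal{G}_{j'}$ would miss this. The theorem statement, and the paper's proof (which silently restricts each integral to $\mathcal{G}_j$), implicitly assume these boundary sets are null, as they are for the L-value densities to which the theorem is applied; under that same assumption your argument goes through.
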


\begin{proof}

We have
\begin{align}
	\label{eq:logsum_proof_conditional_mutinf}
	I(B; L|S) = \sum_{j=1}^{m} f_S(j) I(B;L | S = j) \EqCo
\end{align}
where
\begin{align}
	&I(B;L | S = j) = \\
	& \sum_{b} f_B(b) \int_{\setG_j} W_j(l|b) \log \left(
	\frac{ W_j(l|b) }{ \sum_{b'} f_B(b') W_j(l|b')}
	\right) \, \mathrm{d} l \EqCo \label{eq:logsum_proof_mutinf}
\end{align}
Inserting \eqref{eq:logsum_proof_mutinf} into
\eqref{eq:logsum_proof_conditional_mutinf} and swapping summation and
integration, we obtain
\begin{align}
	&I(B; L|S) = \\
	&\sum_{b} f_B(b) \sum_{j=1}^m f_S(j) \int_{\setG_j} W_j(l|b) \log \left(
	\frac{ W_j(l|b) }{ \sum_{b'} f_B(b') W_j(l|b')}
	\right) \, \mathrm{d} l  \\
	&\geq \sum_b f_B(b) \int_{\bigcup_{j}\setG_j}  W(l|b) \log \left(
	\frac{ W(l|b)}{\sum_{b'} f_B(b') W(l|b')}
	\right)\, \mathrm{d} l \label{eq:logsum_proof_inequality} \\
	&= I(B; L)
\end{align}
where \eqref{eq:logsum_proof_inequality} follows from the log-sum
inequality \cite[Th.~2.7.1]{Cover2006second} and we defined
\begin{align}
	W(l|b) = \sum_{j=1}^m f_S(j) W_j(l|b). 
\end{align}
Moreover, we have equality in \eqref{eq:logsum_proof_inequality}, if and only if
for $b \in \mathbb{B}$
\begin{align}
	\frac{W_j(l|b)}{f_B(0) W_j(l|0) + f_B(1) W_j(l|1)} 
\end{align}
is independent of $j \in \{k: l \in \setG_k\}$  for almost all $l \in \bigcup_{j} \setG_j$. This condition is equivalent to the condition that $g_j(l)$ for all $j \in [m]$ are equal whenever $g_j(l)$ are defined.
\end{proof}

\balance

\end{document}